\documentclass[12pt,letterpaper]{article}

\usepackage{fixltx2e}
\usepackage{textcomp}
\usepackage{fullpage}
\usepackage{amsfonts}
\usepackage{verbatim}
\usepackage[english]{babel}
\usepackage{pifont}
\usepackage{color}
\usepackage{setspace}
\usepackage{lscape}
\usepackage{indentfirst}
\usepackage[normalem]{ulem}
\usepackage{booktabs}
\usepackage{natbib}
\usepackage{float}
\usepackage{latexsym}
\usepackage{url}
\usepackage{hyperref}
\usepackage{epsfig}
\usepackage{graphicx}
\usepackage{amssymb}
\usepackage{amsmath}
\usepackage{bm}
\usepackage{array}
\usepackage[version=3]{mhchem}
\usepackage{ifthen}
\usepackage{caption}
\usepackage{hyperref}
\usepackage{amsthm}
\usepackage{amstext}
\usepackage{lineno}


\def\L{\mathcal{L}}
\def\T{\mathcal{T}}

\newcommand{\PP}{\mbox{$\mathbb P$}}
\newcommand{\EE}{\mbox{$\mathbb E$}}
\newtheorem{theorem}{Theorem}
\newtheorem{proposition}[theorem]{Proposition}

\newtheorem{lemma}[theorem]{Lemma}

\linespread{1.66}
\raggedright
\setlength{\parindent}{0.5in}

\setcounter{secnumdepth}{0}



\renewcommand{\section}[1]{%
\bigskip
\begin{center}
\begin{Large}
\normalfont\scshape #1
\medskip
\end{Large}
\end{center}}

\renewcommand{\subsection}[1]{%
\bigskip
\begin{center}
\begin{large}
\normalfont\itshape #1
\end{large}
\end{center}}

\renewcommand{\subsubsection}[1]{%
\vspace{2ex}
\noindent
\textit{#1.}---}

\renewcommand{\tableofcontents}{}

\bibpunct{(}{)}{;}{a}{}{,}  

\begin{document}
\begin{flushright}
Version dated: \today
\end{flushright}
\bigskip
\noindent Predicting ancestral states in a tree

\bigskip
\medskip
\begin{center}

\noindent{\Large \bf Predicting the ancestral character changes in a tree is typically easier than predicting the root state}
\bigskip



\noindent {\normalsize \sc Olivier Gascuel$^1$ and Mike Steel$^2$}\\
\noindent {\small \it 
$^1$Institut de Biologie Computationnelle, LIRMM, CNRS \& Universit{\'e} de Montpellier, France;\\
$^2$Allan Wilson Centre for Molecular Ecology and Evolution, University of Canterbury, Christchurch, New Zealand}\\
\end{center}
\medskip
\noindent{\bf Corresponding author:} Mike Steel, University of Canterbury, Christchurch, New Zealand; E-mail: mike.steel@canterbury.ac.nz\\


\vspace{1in}

\subsubsection{Abstract}Predicting the ancestral sequences of a group of homologous sequences related by a phylogenetic tree has been the subject of many studies, and numerous methods have been proposed to this purpose. Theoretical results are available that show that when the mutation rate become too large, reconstructing the ancestral state at the tree root is no longer feasible. Here, we also study the reconstruction of the ancestral changes that occurred along the tree edges. We show that, depending on the tree and branch length distribution, reconstructing these changes (i.e.  reconstructing the ancestral state of all internal nodes in the tree) may be easier or harder than reconstructing the ancestral root state. However, results from information theory indicate that for the standard Yule tree,  the task of reconstructing internal node states remains feasible, even for very high substitution rates. Moreover, computer simulations demonstrate that for more complex trees and scenarios, this result still holds. For a large variety of counting, parsimony-based and likelihood-based methods, the predictive accuracy of a randomly selected internal node in the tree is indeed much higher than the accuracy of the same method when applied to the tree root. Moreover, parsimony- and likelihood-based methods appear to be remarkably robust to sampling bias and model mis-specification.\\
\noindent (Keywords: Ancestral state prediction, character evolution, phylogenetic tree, Markov model)\\


\vspace{1.5in}
\addtocounter{section}{1}

\newpage

\section{Introduction}

A fundamental challenge in evolutionary biology is to understand how the traits we observe today in different species evolved from some common ancestral state.
A phylogenetic tree linking the species in question provides the usual way to study this question \citep{lib}. With a tree, one  can attempt to reconstruct the evolution of the traits that we observe at the leaves of the tree by estimating the ancestral
state at the root of the tree and at the other interior nodes. Typical questions of interest include:  
what  the likely ancestral state was, whether  
a particular trait evolved just once in the tree or several times at different clearly identified epochs, and 
how reliable our estimates of ancestral states at internal nodes of the tree are. 
 It is this last question that we are concerned with in this paper.  Using both mathematical and simulation-based analyses, we provide new results concerning the performance of various methods for predicting the ancestral states in a tree.
Our work complements and builds on earlier work in this area ({\em c.f. } \citet{fis}, \citet{gas},  \citet{li2}, \citet{li3}, \citet{mad}, \citet{mos}, \citet{zha}, \citet{zha2}) much of which has
focused on the mathematical performance of maximum parsimony, with an emphasis on tree root prediction rather than on the global scenario of all changes along the tree. Two recent papers \citep{roy, sus} have further investigated the relative merits and limitations of  various ancestral state reconstruction methods;  the former notably shows that the performance ranking of likelihood-based methods heavily depends on the tree topology, while the later deals with reconstructing ancestral state frequencies, rather than
the precise state that occurred on every sequence site.

Some of our results apply to all possible methods in placing upper bounds on the reliability of any estimates. However,  we are also interested in comparing the performance of particular methods, such as {\em Majority Rule} (MR), {\em Maximum Parsimony} (MP) and {\em Maximum Likelihood} (ML).  These methods require increasing levels of  knowledge concerning the tree (for MR the tree is irrelevant, for MP we require the tree topology but not the branch lengths, and for ML we require the tree, the branch lengths and a substitution model). Moreover, there are efficient techniques for predicting the scenario of ancestral changes (e.g. for ML, there is the Pupko algorithm \citep{pup}).   We formally demonstrate that any two of these methods can perform very differently on 
the same tree, and that while ML is `best' if the information required is available, MR can either be much better or much worse than MP, depending on the tree structure and the branch lengths. 

A further focus of this paper is the question of whether the root can be predicted with more or less accuracy than the other internal nodes. The question is of interest since
although there are more leaves below the root than below an internal node, the root is also the most ancient node in the tree, and thus is the  `most distant' from the data we observe today.

We first show that there are situations
whether the root can be estimated either much more accurately or much less accurately than the internal nodes.  Turning to Yule pure-birth trees, we establish a result that seems slightly surprising at first:  in certain regimes, where it is impossible to  predict the state at the root of the tree accurately, it is still possible to estimate the ancestral state at a randomly chosen internal node with some accuracy.

The structure of this paper is as follows: we first present basic definitions and an information-theoretic lemma, before mathematically investigating  the relative performance of different methods for predicting  both the root state and the states at internal nodes. We then analyse the  expected performance of methods on Yule trees mathematically.  Simulations confirm this analysis and show that the results still apply with more realistic, non-molecular clock trees and complex models.   We end with a brief discussion.

\section{Mathematical concepts and tools}

\subsection{Definitions}

Consider any method $M$ for predicting ancestral states at the internal nodes of a rooted tree $T$ based on data that consist of the observed state at each leaf $l$ in the leaf set $\L$ of $T$. 
Given any particular assignment of states $x_\L$ to the leaves of $T$,
we will let $M(x_\L, T, v)$ denote the state that method $M$ estimates to be the ancestral state for internal node $v$.  
A standard assumption in statistical phylogenetics is  that the states have evolved on the tree according to some stochastic process (model).
In this case, the assignment of states to the set of leaves (denoted here as $X_\L$) and to any node $v$ of the tree (denoted here as $X(v)$) are random variables. 

We are interested
in the probability that any given method $M$ is able to  predict the state at $v$ correctly given the states at the other leaves, that is $\PP(M(X_\L, T, v) = X(v)).$ 
We call this probability the
 {\em predictive accuracy} of method $M$ at a node $v$ of $T$, and denote it as $PA_M(v, T)$. Thus:
$$PA_M(v, T) := \PP(M(X_\L, T, v) = X(v)).$$

We are also interested in the average of this probability over all the internal nodes of the tree (including the root), and so we let $\overline{PA}_M(T)$ denote the average of $PA_M(v, T)$ over all internal nodes of $T$.   Equivalently, $\overline{PA}_M(T)$ is the predictive accuracy of $M$ at an internal node selected at random.

There is a trivial `lower' bound for these predictive accuracy measures over all methods $M$, namely the one we obtain by the rather naive `method' in which  the leaf data are ignored altogether and 
the state at $v$ is estimated to be the most probable {\em a priori} state under the model (in this case $M(X_\L, T, v)$ is independent of $X_\L$).

\subsection{Models}

In this paper, we deal with time-reversible continuous Markovian processes.  Therefore  if $(\pi_1, \ldots, \pi_r)$ denotes the (unique) stationary distribution on the $r$ states then this trivial lower bound on $PA_M(v,T)$, for any node $v$ of $T$,  is just:
\begin{equation}
\label{pieq}
\pi:=\max_i \{\pi_i\}.
\end{equation}
For instance, for the Jukes-Cantor model or the Kimura 2ST model, we have $\pi = 1/4$.  
Both $PA_M(v, T)$ and $\overline{PA}_M(T)$ lie between $\pi$ and $1$, with $1$ corresponding to perfect prediction, and $\pi$ corresponding to a prediction that is no better than the naive method that ignores the data.
We refer to $\pi$ as the {\em trivial bound} for $PA_M(v,T)$ and $\overline{PA}_M(T)$.

Note, however, that $\pi$  is not a universal lower bound for $PA_M(v,T)$ and $\overline{PA}_M(T)$;  for example,  an even worse `method'  is to systematically predict the state having the lowest {\em a priori} probability.

 In \citep{gas}  we defined a general time-reversible (GTR) process on $r$ states  to be a {\em conservative model} if  the original state is always more probable that any
of the alternative states; formally:
\begin{equation}
\label{ineqpi}
\pi_{ii}(t) >\pi_{ij}(t), \mbox{ for all $t\geq 0$ and all $i \neq j$},
\end{equation}
where $\pi_{ij}(t)$ denotes the transition probability of ending in state $j$ after time $t$ given that $i$ was the initial state.
Eqn. (\ref{ineqpi}) is referred to as the `forward inequality' in \cite{sob}. 

Notice that if $\pi_j$ denotes the equilibrium distribution for any conservative model, then $\pi_{ij}(t)$ converges to $\pi_j$ as $t$ increases and for all initial states $i$.  Therefore,  from Eqn. (\ref{ineqpi}), we have 
$\pi_i \geq \pi_j$ for all $i, j$.   This implies that any conservative model on $r$ states has a uniform equilibrium distribution $(1/r, 1/r, \ldots, 1/r)$, and so (from Eqn. (\ref{pieq})), we have  $\pi= 1/r$.

\subsection{Ancestral state prediction methods}

We consider three main classes of methods for predicting ancestral states, each of which requires a different degree of knowledge concerning the tree. 
The simplest method, and the one that ignores the tree totally, relying just on the states of the leaves, is MR (majority rule). This method estimates the state at a node $v$ as the state
that occurs most frequently among the leaves that lie in the clade that has $v$ as its root (any tie is broken uniformly at random). 
A simple method that takes the tree topology (but not its branch lengths) into account is MP (maximum parsimony) which estimates the state at a node to be the one that minimizes the number of 
substitutions required to explain the evolution of the states observed at the leaves  on the tree.   Finally, if one knows the tree topology, the branch lengths and a substitution model for describing the evolution of the states in the tree, then ML (maximum likelihood) provides a further 
approach to estimating ancestral states \citep{pag}. 

For both MP and ML estimation, it is clear how to estimate the state at the root of the tree. For other internal nodes, there are various options, which we discuss later. For example, it is always possible (and straightforward) to use a root prediction method based on the subtree rooted with $v$. This approach is expected to be less accurate than if we were to  consider all tree leaves when predicting any given node $v$. But mathematical proofs are easier and give a lower bound for the predictive accuracy of more sophisticated approaches (to be described later). 

For the mathematical analysis, we mostly deal with MR and MP.  In  the results that follow, we also describe some other simple
methods for predicting ancestral states in a tree.   Note, however, that some of the likelihood approaches (namely the method that involves predicting the maximum posterior probability ancestral state) can be shown to confer the largest predictive accuracy amongst all methods ({\em c.f.} theorem 3.1 of \citet{stesze}).  This means that our positive results (convergence of the predictive accuracy to 1) obtained with MR or MP  in models that have equal {\em a priori} probabilities of states, also apply to ML methods.

\subsection{Predicting the ancestral states simultaneously at all nodes}

One can  also try to predict the exact history of character evolution on a tree -- in other words, the ancestral state at every internal node. 
In general, this last task is difficult to guarantee with any accuracy, particularly when the tree is large.  It might seem that this problem is hopeless; however, the states at the internal nodes are highly correlated and so the probability of an accurate complete reconstruction for large trees may not be small.    Although few results are available to guarantee the accuracy of reconstructing a complete scenario of changes, there exists a rigorous and explicit lower bound on the accuracy of reconstruction,  under very strong assumptions of very low substitution rates, even for trees with many taxa.

More precisely, suppose the state changes are rare enough that 
any two edges with changes are separated by at least three edges with no changes.  Then the MP reconstruction of the character state changes in the tree is not only unique but it is guaranteed to coincide exactly with evolution of the character within the tree.
This is a combinatorial result, but it translates through to a stochastic bound -- if the probability of transitions are low enough then we are almost sure to be able to reconstruct the transitions within the tree accurately. 

For  details, the reader is referred to proposition 9.5.1 of \citet{ste2005}, which provides an example:  if $n=10,000$, and the probability of a substitution under (say) a Jukes--Cantor model  on each edge is $2\times 10^{-4}$,  we obtain an accuracy for complete reconstruction of $0.99$. This result is a `worst-case' analysis, and in practice, accurate reconstruction may be possible at a higher substitution rate.

\subsection{Information loss}

Information theory provides a useful way to obtain a bound on predictive accuracy that applies across all methods.  
To describe this, first recall that the {\em mutual information} of two random variables $X$ and $Y$ 
is defined by: $$I(X,Y) = \sum_{x,y} \PP(X=x\&Y=y)\ln \left(\frac{\PP(X=x \&Y=y)}{\PP(X=x)\PP(Y=y)}\right).$$ 
$I(X;Y)$ is a non-negative symmetric measure which vanishes precisely
when $X$ and $Y$ are independent, and which has a number of attractive properties \citep{cov}. 
Moreover,  if the mutual information between the states at the leaves and the state
at an internal node is small, then no method can accurately predict the ancestral state at that node from the observed states at the leaves. This is sometimes formalized
in the form of Fano's Lemma \citep{cov}; however, we describe a bound that is more explicit for our purpose,  the proof of which 
is provided in the Appendix.

\begin{lemma}
\label{rab}
For {\em any} ancestral state estimation method $M$ (deterministic or randomized), and any internal node $v$ in a tree $T$, we have:
$$PA_M(v,T) \leq \pi+ \sqrt{ \frac{1}{2}I(X(v); X_\L)}.$$
\end{lemma}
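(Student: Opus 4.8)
The plan is to reduce the bound to a statement about the optimal (Bayes) estimator and then to control how far the posterior distribution of $X(v)$ can drift away from its prior. First I would condition on the leaf data: for any (possibly randomized) method $M$,
$$PA_M(v,T) = \sum_x \PP(X_\L = x)\,\PP(X(v) = M(x,T,v)\mid X_\L = x) \leq \sum_x \PP(X_\L = x)\max_i \PP(X(v) = i\mid X_\L = x),$$
since no prediction rule can beat reporting the most probable posterior state (this is the maximum-posterior-probability estimator, optimal by theorem 3.1 of \citet{stesze}). So it suffices to establish the bound for this optimal estimator.

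Next I would compare, for each fixed $x$, the posterior maximum $\max_i \PP(X(v)=i\mid x)$ with the prior maximum $\max_i \PP(X(v)=i)=\pi$; the marginal distribution of $X(v)$ is the stationary distribution of the reversible process, so its largest entry is exactly $\pi$. The elementary fact that $|\max_i p_i - \max_i q_i|$ is at most the total variation distance $\|p-q\| = \tfrac12\sum_i |p_i - q_i|$ between two distributions $p,q$ — obtained by testing the single event that attains the larger of the two maxima — gives $\max_i \PP(X(v)=i\mid x) \leq \pi + \|\PP(X(v)=\cdot\mid x) - \pi\|$. Averaging over $x$ then yields $PA_M(v,T) \leq \pi + \sum_x \PP(X_\L=x)\,\|\PP(X(v)=\cdot\mid x)-\pi\|$.

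It remains to tie the averaged total variation term to the mutual information. Here I would apply Pinsker's inequality, $\|p-q\| \leq \sqrt{\tfrac12 D(p\|q)}$, to each summand with $p=\PP(X(v)=\cdot\mid x)$ and $q=\pi$, and then push the expectation over $x$ inside the square root using Jensen's inequality (concavity of $\sqrt{\cdot}$). The key identity is that, because the prior of $X(v)$ equals $\pi$, the averaged Kullback--Leibler divergence $\sum_x \PP(X_\L=x)\,D(\PP(X(v)=\cdot\mid x)\,\|\,\pi)$ is precisely the mutual information $I(X(v);X_\L)$. Combining these gives $\sum_x \PP(X_\L=x)\,\|\PP(X(v)=\cdot\mid x)-\pi\| \leq \sqrt{\tfrac12 I(X(v);X_\L)}$, and hence the claimed inequality.

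The individual steps are routine; the one point requiring care — and the reason a two-step Pinsker-then-Jensen argument is needed rather than a single bound — is that mutual information is an \emph{average} of divergences $D(\PP(X(v)=\cdot\mid x)\,\|\,\pi)$ over the leaf configurations, not a single divergence, so the concavity of the square root must be invoked to turn an average of square roots into the square root of the average. I would also be careful to justify that the marginal of $X(v)$ coincides with $\pi$ (so that its maximum is the trivial bound), which rests on the root being drawn from the stationary distribution of the reversible model.
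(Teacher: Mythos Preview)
Your argument is correct, but it follows a different route from the paper's proof. The paper applies Pinsker's inequality \emph{once}, directly to the joint distribution: since $I(X(v);X_\L)=D(P_{X(v),X_\L}\,\|\,P_{X(v)}P_{X_\L})$, Pinsker gives $\sqrt{\tfrac12 I(X(v);X_\L)}\ge |P_{X(v),X_\L}(E)-P_{X(v)}P_{X_\L}(E)|$ for every event $E$, and one simply takes $E=\{M(X_\L)=X(v)\}$. The first term is $PA_M(v,T)$ by definition; the second is shown by an elementary calculation to be at most $\pi$, using only that $\sum_x \PP(M(y)=x)=1$ for each $y$. No reduction to the Bayes rule and no Jensen step are needed.

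Your approach is also valid but requires two extra ingredients: the reduction to the MAP estimator, and the concavity of $\sqrt{\cdot}$ to pass from an average of conditional Pinsker bounds to the square root of the mutual information. What your route buys is a clearer picture of \emph{why} the bound holds --- it makes explicit that the only way any method can outperform the trivial bound is for the posterior on $X(v)$ to move away from the prior, and Pinsker then quantifies that movement. The paper's route is shorter and handles randomized $M$ without first passing through the optimal estimator, but is perhaps less transparent about the mechanism.
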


As an application of this lemma, suppose we have a two-state symmetric process (the Neyman two-state model), and we are interested in predicting the state at the root ($\rho$)  from the leaf states
by some method.  Then for a given tree $T$ with $n$ leaves, mathematical results from \citet{eva} can be used to show that:

\begin{equation}
\label{eva2}
I(X(\rho); X_\L) \leq n \exp(-4qt),
\end{equation}
 for an ultrametric tree of height $t$, where $q$ is the substitution rate.
Lemma~\ref{rab} now tells us that  the predictive accuracy of any method to estimate the root state will decay towards the trivial bound unless the number $n$ of leaves grows exponentially with the expected number of
substitutions between the root and any given tip ($qt$); indeed, the number of taxa required cannot be much less than some constant times  $e^{4qt}$.

\section{Mathematical Results I: The range of possibilities}

We have described above an application of Lemma~\ref{rab} that shows that, in a certain regime, all methods must have low predictive accuracy in estimating the root state. 
However it oversimplifies matters to say that this is because the root is simply `too ancient';  there are some trees for which the root can be estimated with higher accuracy than a more
recent node (an example is provided in figure 5b of \citet{sob}). 

Here we carry this a step further, and show that there are trees for which the state at the ancient root node can be predicted with arbitrarily high accuracy, and yet none of the other internal nodes
can have their state predicted with an accuracy much higher than the trivial bound.   We also demonstrate a less surprising converse relationship.
This is summarized in the following result.

\begin{theorem}
\label{firstthm}
\mbox{ }

For any conservative GTR  process, and for $\delta>0$, the following hold: 
\begin{itemize}
\item[(1)]  There are trees and branch lengths for which the  root state can be predicted with an accuracy of at least $1-\delta$ but no method can infer the states at any non-root node with an accuracy that is much better than the trivial bound plus $\delta$.

\item[(2)] There are trees and branch lengths for which the  states at all the non-root internal nodes can be predicted with an accuracy of at least $1-\delta$  but no method can infer the states at the root of the tree with an accuracy that is  better than the trivial bound plus $\delta$.

\end{itemize}
\end{theorem}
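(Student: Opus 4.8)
The plan is to prove both parts by explicitly constructing trees and assigning branch lengths, exploiting two features of a conservative GTR process established above: the equilibrium distribution is uniform, so the trivial bound is $\pi=1/r$; and the transition probabilities behave in two opposite limits -- along a very short edge a node almost surely keeps its parent's state ($\pi_{ii}(\epsilon)\to 1$ as $\epsilon\to 0$), whereas along a very long edge the transition probabilities converge to the uniform distribution ($\pi_{ij}(L)\to 1/r$ as $L\to\infty$), so a long edge makes its two endpoints asymptotically independent. This gives me two tools. To make a node \emph{easy} I attach many leaves to it by short edges and predict by majority rule: each such leaf agrees with the node's state with probability $\pi_{ii}(\epsilon)$, and a law-of-large-numbers argument makes the majority correct with probability at least $1-\delta$ once the number of leaves is large and $\epsilon$ small. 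To make a node $v$ \emph{hard} I surround it with long edges and certify the bound through Lemma~\ref{rab}: by the tree Markov property, conditioning on the states at the neighbours of $v$ renders $X(v)$ independent of all leaf states, so $I(X(v);X_\L)\le I(X(v);Y)$, where $Y$ collects the neighbour states, and this last quantity tends to $0$ when every edge incident to $v$ is long. Choosing the lengths so that $I(X(v);X_\L)\le 2\delta^2$, Lemma~\ref{rab} then gives $PA_M(v,T)\le\pi+\delta$ for \emph{every} method $M$.

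For part (1) I take the root $\rho$ to be a high-degree vertex carrying a star of $N$ leaves attached by short edges of length $\epsilon$; majority rule over these leaves predicts the root with accuracy at least $1-\delta$. To the same root I attach, by a single long edge of length $L$, one internal node $u$ whose own two leaves hang from it by long edges of length $L$. Then $u$ is the only non-root internal node, its three incident edges are all long, and so $I(X(u);X_\L)\le I(X(u);(X(\rho),X(a),X(b)))\to 0$, making $u$ hard by the argument above. It is worth stressing that the multifurcation at the root is essential in this direction: in a strictly binary tree the leaves making the root easy would have to be routed through internal nodes close to the root, and those nodes would then also be easy; attaching the leaves \emph{directly} to $\rho$ is precisely what decouples root-predictability from internal-node predictability.

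For part (2) I reverse the roles. I take the root $\rho$ with two children $v_1,v_2$ reached by long edges of length $L$, and I make each $v_i$ the centre of a star of many leaves attached by short edges of length $\epsilon$. Each $v_i$ is then easy by the majority-rule argument, and since $v_1,v_2$ are the only non-root internal nodes, all non-root internal states are recovered with accuracy at least $1-\delta$. The root, however, is shielded from every leaf by the two long edges: data processing gives $I(X(\rho);X_\L)\le I(X(\rho);(X(v_1),X(v_2)))$, and this vanishes as $L\to\infty$ because each $v_i$ becomes independent of $\rho$ across its long edge, so Lemma~\ref{rab} forces $PA_M(\rho,T)\le\pi+\delta$ for every method. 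Finally I fix $N,\epsilon,L$ (and the star sizes in part (2)) as explicit functions of $\delta$ so that the easy and hard estimates hold simultaneously.

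The step I expect to be the main obstacle is making the hardness bound quantitative rather than merely asymptotic: I need an explicit rate for how fast $\pi_{ij}(L)$ approaches $1/r$ (governed by the second-largest eigenvalue of the rate matrix) in order to pin down $L$ as a function of $\delta$ and to drive the neighbour-state estimate $I(X(v);Y)$ down to the threshold $2\delta^2$ demanded by Lemma~\ref{rab}; this is most cleanly done by bounding $I(X(v);Y)$ through the total-variation distance between the law of $Y$ given $X(v)=i$ and a fixed product-of-uniforms distribution, uniformly in $i$. A secondary point to verify is that the ``easy'' majority-rule estimator is not degraded by the many leaves that are independent of the target node; this is handled simply by restricting majority rule to the relevant clade so that the independent leaves are ignored.
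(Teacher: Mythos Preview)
Your proposal is correct, and for Part~(2) it is essentially identical to the paper's argument: root with two long edges to nodes $v_1,v_2$, each carrying a short-edge subtree; majority rule makes each $v_i$ easy, and data processing through $(X(v_1),X(v_2))$ together with Lemma~\ref{rab} makes the root hard.

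For Part~(1), however, your construction differs genuinely from the paper's. You make the root easy by attaching many leaves to it via \emph{short} edges, and make the single non-root internal node $u$ hard by surrounding it with three \emph{long} edges. The paper instead takes \emph{all} edges to have the same large length~$t$: the root has $n$ children $v_1,\dots,v_n$, each $v_i$ carries two leaf children, and every edge has length~$t$. The root is then predictable not because any individual leaf is close, but because conservativity guarantees $\pi_{ii}(2t)>\pi_{ij}(2t)$ even for large $t$, so the majority among the $n$ conditionally i.i.d.\ odd-indexed leaves recovers the root state once $n$ is large enough (with $t$ held fixed); each $v_i$, by contrast, has only three neighbours, all at distance $t$, and is therefore hard by precisely your data-processing argument. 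Your construction is more elementary---it sidesteps the ``many weak signals aggregate'' step entirely---whereas the paper's is more striking, showing that the phenomenon persists even when \emph{every} edge is long, and producing many hard non-root internal nodes rather than just one. A small caveat: your remark that the multifurcation at the root is ``essential in this direction'' is true for \emph{your} short-edge construction (short edges would indeed propagate easiness to nearby binary internal nodes), but the paper's all-long-edge construction can accommodate a fully resolved top portion as well, so multifurcation is not intrinsic to the theorem.
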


{\em Proof:}  A formal proof of Theorem~\ref{firstthm} is provided in the Online Appendix 1; here, we describe  the intuition behind the proof informally.
Part (1) considers the tree shown in Fig.~\ref{tricky_tree1}(a).   Here, the root state can be accurately predicted as $n$ becomes large because,
although each leaf provides little information about the root state when $t$ is  large,  collectively, these leaves (for large enough values of $n$) allow us to infer the root state with an accuracy as close to 1 as we wish. 
By contrast, any other internal node
 has just  three adjacent nodes, each of which will be far from that node for large values of $t$, and in this case, there is no advantage obtained by increasing $n$.  
Notice that the estimation of the root state  is consistent with Lemma~\ref{rab} (and consequential bounds such as (\ref{eva2})), since we are first selecting a large value of $t$ and with this fixed, we let $n$ tend to infinity.

\begin{figure}[ht]
\begin{center}
\includegraphics[scale=.8]{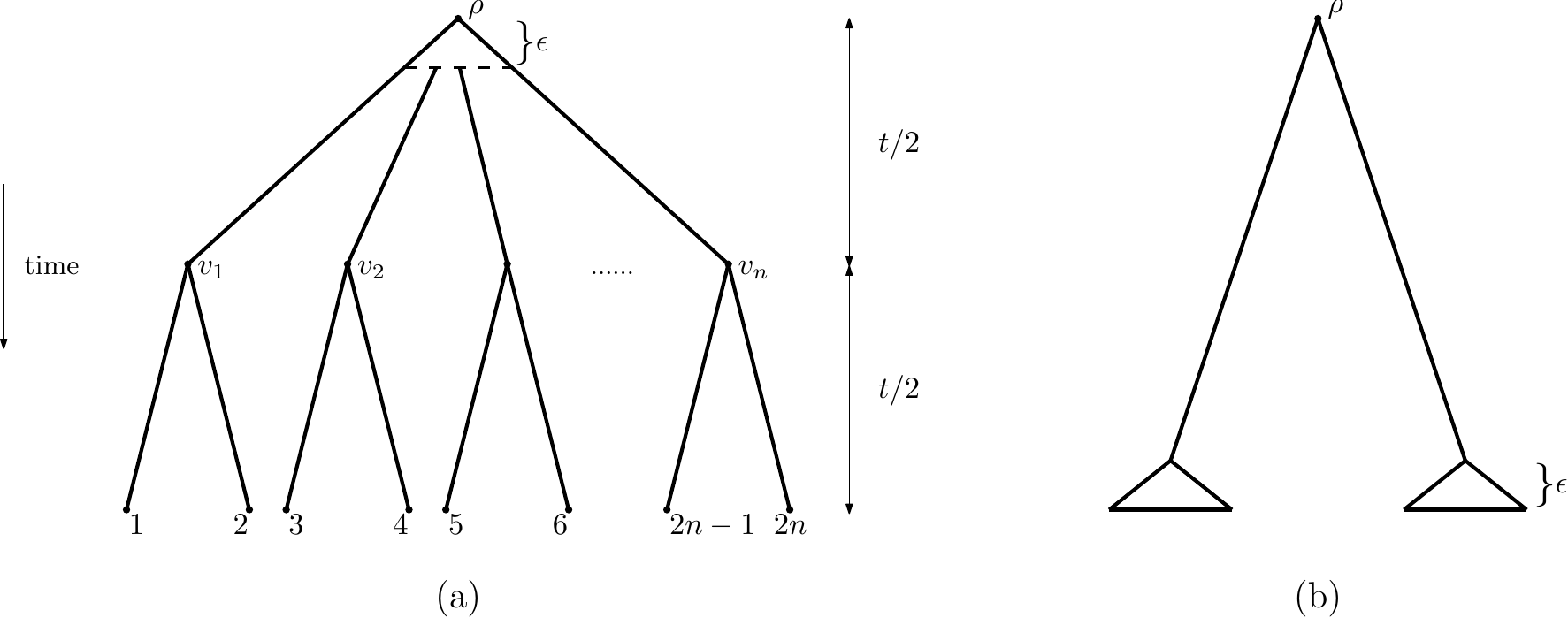}
\end{center}\centering
\caption{(a) A tree for which the state at the root of the tree can be estimated with much higher accuracy than any of the more recent nodes. The top portion of the tree
can be any tree (including a polytomy or a fully resolved tree).  (b) A tree for which  the root state of the tree is difficult to estimate accurately, but the states at the other nodes can be predicted accurately.
}
\label{tricky_tree1}
\end{figure}

The justification of  Part (2) is much simpler when we  consider the tree in Fig.~\ref{tricky_tree1}(b).  If the height ($\epsilon$) of the two subtrees is sufficiently small and the length of the two 
edges that are incident with the root node are sufficiently long, then every internal node can be predicted with high accuracy, while the root node cannot.
In this case, referring again to Lemma~\ref{rab}, for estimating the root state we fix $n$ and $\epsilon$ but let $t$ tend to infinity.

\hfill$\Box$

The accuracy of methods for estimating an internal node (including the root node) can vary considerably between methods -- for some tree shapes,
one method can have high predictive accuracy while another may have a low one;  for a different tree shape, the relative predictive accuracy of these two methods can be reversed.

\begin{theorem}
\label{nextthm}

For any $\delta>0$ (as small as we wish) and a two-state symmetric substitution process:
\begin{itemize}
\item[(1)] there are trees and branch lengths for which  the predictive accuracy of MP for root state estimation is at least $1-\delta$, while that of MR is less than the trivial bound plus $\delta$;
\item[(2)] there are trees and branch lengths for which the predictive accuracy of MR for root state estimation is at least $1-\delta$, while that of MP is less than the trivial bound plus $\delta$;
\item[(3)] assuming the branch lengths of the trees in Part (1) and Part (2) are known, the predictive accuracy of ML is at least $1-\delta$ in both cases. 
\end{itemize} 

\end{theorem}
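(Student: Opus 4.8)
The plan is to prove all three parts with a single family of tree shapes: a root $\rho$ whose children fall into two groups --- a collection of \emph{singleton} leaves attached by short edges, and one \emph{bush} (a large star of leaves sharing a common parent) --- where only the edge lengths are swapped between Part (1) and Part (2). (Both trees use multifurcations at $\rho$, which are permitted.) Throughout I would use three facts about the two-state symmetric model: the trivial bound is $\pi=1/2$; MR at $\rho$ returns the plurality state over \emph{all} leaves (ties broken at random); and, by Fitch's algorithm, MP at a multifurcating root returns a uniformly random element of the plurality among the Fitch sets of $\rho$'s children. Since each singleton leaf, and each bush (all of whose leaves agree, on short internal edges), has a \emph{singleton} Fitch set, the MP root prediction collapses to a plurality vote over the children --- one vote per child --- in sharp contrast to MR's plurality over leaves.

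For Part (1) I would attach to $\rho$ a number $K\ge 2$ of singleton leaves by \emph{short} edges and one bush of $m\gg K$ leaves hanging on a single \emph{long} edge. First I would show $PA_{\mathrm{MP}}(\rho)\ge 1-\delta$: taking the short edges short enough, all $K$ singletons equal the root state $R$ with probability at least $1-\delta$, so the child-plurality is $R$ (the bush casts only one vote). Next I would show MR collapses to $1/2$: the long edge randomises the bush's shared state $Y$ relative to $R$, and since $m\gg K$ the leaf-plurality equals $Y$, which is asymptotically uniform and independent of $R$. Part (2) simply swaps the edge lengths: one bush of $m$ leaves on a \emph{short} edge (so all its leaves equal $R$, giving MR a dominant correct block once $m\gg K$) together with $K$ singletons on \emph{long} edges. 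Now $PA_{\mathrm{MR}}(\rho)\to 1$, whereas the MP child-vote pits one correct vote (the bush) against $K$ asymptotically independent uniform votes (the singletons); the plurality of $K$ fair coins is, in the limit, an unbiased coin independent of $R$, so $PA_{\mathrm{MP}}(\rho)\to 1/2$.

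For Part (3) I would invoke the optimality of the maximum-posterior ML reconstruction (theorem 3.1 of \citet{stesze}): with the true branch lengths and model it attains the largest predictive accuracy of \emph{any} method using the leaf data. Hence on the Part (1) tree $PA_{\mathrm{ML}}(\rho)\ge PA_{\mathrm{MP}}(\rho)\ge 1-\delta$, and on the Part (2) tree $PA_{\mathrm{ML}}(\rho)\ge PA_{\mathrm{MR}}(\rho)\ge 1-\delta$, which gives the claim in both cases with no further computation.

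The main obstacle is the $PA_{\mathrm{MP}}(\rho)\to 1/2$ analysis in Part (2): I must show that the plurality of the $K$ long-edge singleton votes becomes asymptotically independent of $R$ and unbiased, and that the lone correct bush vote cannot rescue it. This requires handling the quantifiers in the right order --- first let the singleton edge length $\ell\to\infty$ so each singleton state approaches a uniform bit independent of $R$ (its transition probability $\to 1/2$), then let $K\to\infty$ so the single deterministic vote perturbs the plurality only on the vanishing, $O(1/\sqrt{K})$, tie event --- together with the matching $m\gg K$ choice securing MR. Confirming that the Fitch recursion really reduces each bush to a single vote, and that the MR/MP pluralities behave as claimed under these limits, is the only delicate bookkeeping; everything else reduces to direct computation.
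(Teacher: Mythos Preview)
Your construction is correct and takes a genuinely different route from the paper's. The paper treats the two parts with separate tree shapes: for Part~(1) it uses the tree of Fig.~\ref{tricky_tree2}(a) (a large near-star subtree $T_0$ on a long edge, plus two balanced binary subtrees $T_1,T_2$ attached near the root) and proves the MP claim via an auxiliary lemma on chains of subtrees together with an earlier result on balanced trees; for Part~(2) it uses a caterpillar with short interior and long pendant edges, and analyses MP by computing the stationary distribution of the three-state linear recursion $(s_n,d_n,e_n)$ that the Fitch algorithm induces along the caterpillar spine. Your single ``$K$ singletons plus one bush'' polytomy, with only the edge lengths swapped between Parts~(1) and~(2), is more economical: because Hartigan's rule at a multifurcating root with singleton child sets reduces to a plurality vote over the children, both MP analyses become elementary counting arguments (in Part~(2), the accuracy of one correct vote against $K$ fair coins is indeed $\tfrac12+O(K^{-1/2})$, exactly as you note). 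The price is that your argument relies essentially on the multifurcation at $\rho$; resolving it into a binary tree with zero-length edges would turn the plurality vote back into the sequential Fitch recursion and you would be back to the paper's caterpillar analysis. The paper's constructions are binary throughout, which some may regard as the more natural phylogenetic setting, but at the cost of the extra machinery. Part~(3) is handled identically in both approaches.
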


{\em Proof:}  A formal proof of Theorem~\ref{nextthm} is provided in the Online Appendix 2;  but again we describe  the intuition behind the proof informally.

For Part (1), consider the tree in  Fig.~\ref{tricky_tree2}(a). In this case  MR is dominated by the distribution of states in the leaves of the left-most subtree $T_0$. However, when $L$ is large, the root of this 
subtree ($v$)  is distant from the root node ($\rho$),  so the majority estimate from this subtree will be a poor indicator of the state at this ancestral node $\rho$.  
By contrast, MP performs well if the edges in the two right-hand subtrees $T_1$ and $T_2$ are both sufficiently short and they attach close to the root of the tree
-- for any given large value of $L$ (which, in turn,  fixes the height of the tree), this can always be achieved by making $n$ sufficiently large  and selecting a balanced binary tree for $T_1$ and $T_2$.   In that case, proposition 2.1 of \citet{gas} assures us that the first pass of the Fitch-Hartigan algorithm will be an accurate predictor of the root of $T_1$ and $T_2$, and these accurate estimates effectively determine the predicted state at the root $\rho$ regardless of the possibly unreliable estimate of the root state in the more leaf-rich left-hand tree $T_0$.

For Part (2), consider the tree in Fig.~\ref{tricky_tree2}(b) which has $n$ leaves and all its pendant edges are long, but all its other edges are very short (this might arise, for example, in an early `rapid radiation' scenario).  If this initial radiation is sufficiently short, then each node in the top part of the tree will be in the same state as the root of the tree, and so the states at the leaves represent independent and identically perturbed samples of this state.   MR has a predictive accuracy converging to 1 as $n$ grows (the height of the tree is assumed to be fixed)  since each leaf provides  a (nearly) independent estimate of the root state. However, for MP
the topology of the top part of the tree plays a crucial role -- for instance, if one leaf is incident with the root, then this leaf (which is a poor indicator of the root state by itself) can have a large influence on predictions of the 
root state  under MP.   More generally, we show that if the topology of the top part of the tree is a caterpillar tree  then the predictive accuracy of MP will converge  to a value that is no more than the trivial bound plus $\delta$ as $n$ tends to infinity, provided that (i) the height of the tree is sufficiently large (relative to $\delta$, not $n$) and that the height of the top part of the tree ($\epsilon$) is sufficiently small. 
For full details of these arguments, the reader is referred to the Online Appendix 2.

For Part (3), the two-state symmetric model has a uniform distribution of states at the root so ML estimation maximizes the predictive accuracy and is more accurate than both MP and MR, as noted earlier.   Incidentally, if the branch lengths are not known, this is no longer the case even for the two-state model, since if the branch lengths
are regarded as nuisance parameters in the ML estimation of the root state, then this method becomes identical to MP \citep{tuf}.
\hfill$\Box$

\begin{figure}[ht]
\begin{center}
\includegraphics[scale=1.0]{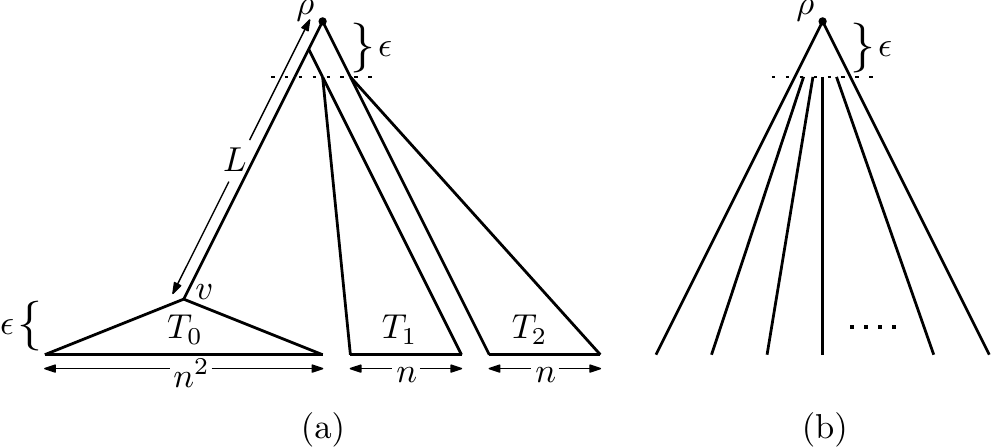}
\end{center}\centering
\caption{The accuracy of root state estimation methods as $n$ grows with the depth of the tree fixed. Portions of the tree not shown in detail are described in the text (a) A tree for which the accuracy of MP converges to 1, but that of MR declines towards the trivial bound;  (b)  a `rapid radiation' scenario where the accuracy of MR converges to 1, but the accuracy of MP converges to a lower value as $n$ grows.
} 
\label{tricky_tree2}
\end{figure}

 Theorem~\ref{nextthm}(2) shows the predictive accuracy of MP  in estimating the root state can be close to  the trivial bound if the top part of the tree has the shape of a caterpillar tree.  However, such trees are highly unbalanced, and so it is pertinent to ask whether the predictive accuracy would improve if the tree was more balanced. We  thus  consider the extreme 
case of  a completely balanced tree (i.e. there are $2^h$ leaves all at depth $h$ from the root), and then make some remarks concerning the case of  random trees.    For the case of a completely balanced tree we have  a quite different predictive accuracy result to the caterpillar  in the following result (the proof of which is in Online Appendix 3).

\begin{proposition}
\label{parsimony_yule}

Consider the tree shown in Fig.~\ref{tricky_tree2}(b). If  the top part of the tree has a completely balanced topology then as $\epsilon$ converges to zero, and $n$ grows (the height of the tree is fixed), the predictive accuracy of MP in estimating the root state in a symmetric two-state model converges to 1.

\end{proposition}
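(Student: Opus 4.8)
The plan is to exploit the fact that MP (the Fitch--Hartigan algorithm) depends only on the tree topology, so that sending $\epsilon\to 0$ leaves the MP estimate unchanged while greatly simplifying the stochastic model of the leaf data. First I would argue that, conditioned on the root state $X(\rho)$ (say $X(\rho)=0$), as $\epsilon\to 0$ the entire balanced top part is in state $0$ with probability tending to $1$, so that in the limit the $n=2^h$ leaf states become independent and identically distributed: each leaf equals $0$ with probability $1-p$ and equals $1$ with probability $p$, where $p=\frac12(1-e^{-2qL})<\frac12$ is the net-flip probability along a (now full-height) pendant edge. Since the height is fixed and finite we have $p<\tfrac12$. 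This reduces the claim to a purely combinatorial-probabilistic statement about the Fitch first pass on a completely balanced binary topology with i.i.d. leaf inputs; recall that for a two-state character the set of MP-optimal root states is exactly the Fitch set produced by the bottom-up pass, with ties broken uniformly.

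Next I would set up the level-by-level recursion for this pass. Writing $a_k,b_k,c_k$ for the probabilities that the Fitch set at a node $k$ levels above the leaves equals $\{0\}$, $\{1\}$ and $\{0,1\}$ respectively (so $a_k+b_k+c_k=1$, with $a_0=1-p,\ b_0=p,\ c_0=0$), the Fitch rule (intersect if the children's sets meet, else take the union) together with the independence of the two child subtrees gives
\begin{gather*}
a_{k+1}=a_k^2+2a_kc_k,\qquad b_{k+1}=b_k^2+2b_kc_k,\\
c_{k+1}=2a_kb_k+c_k^2.
\end{gather*}
Conditioned on $X(\rho)=0$ the MP estimate is correct exactly when the root Fitch set is $\{0\}$ and correct with probability $\tfrac12$ when it is $\{0,1\}$, so the predictive accuracy equals $a_h+\tfrac12 c_h=\tfrac12\bigl(1+(a_h-b_h)\bigr)$, and by symmetry the same value is obtained conditioned on $X(\rho)=1$. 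Hence it suffices to show $a_h-b_h\to 1$ as $h\to\infty$.

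The key step, which I expect to be the cleanest once spotted, is that the signed quantity $d_k:=a_k-b_k$ satisfies the simple recursion $d_{k+1}=d_k(1+c_k)$, obtained by subtracting the $b$-recursion from the $a$-recursion and using $a_k+b_k+2c_k=1+c_k$. Because $d_0=1-2p>0$ and every $c_k\ge 0$, the sequence $(d_k)$ is strictly positive and non-decreasing, and it is bounded above by $1$, so it converges to some $d_\infty\in(0,1]$. From $d_{k+1}/d_k=1+c_k\to 1$ I would deduce $c_k\to 0$; passing to the limit in the $c$-recursion then yields $0=2a_\infty b_\infty$, and since $a_\infty=\tfrac12(1+d_\infty)>0$ this forces $b_\infty=0$ and hence $d_\infty=1$. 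Therefore the accuracy $\tfrac12(1+d_h)\to 1$.

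The main obstacle is not the recursion itself but ruling out the a priori possibility that $d_k$ stagnates at a limit below $1$; the fixed-point argument above, driven by $c_k\to 0$, is what closes this gap, and it is precisely the step that uses the balanced topology, since on a caterpillar the analogous recursion need not force $c_k\to 0$. A secondary and more routine point to make rigorous is the interchange of limits: I would take $\epsilon\to 0$ for each fixed height $h$ first (a continuity argument, since for $\epsilon>0$ the leaf states are only approximately i.i.d. and all transition probabilities depend continuously on $\epsilon$), obtaining accuracy $\tfrac12(1+d_h)$ in that limit, and only then let $n=2^h\to\infty$; this iterated limit is exactly the regime described in the statement.
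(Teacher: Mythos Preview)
Your proposal is correct and follows essentially the same route as the paper: both reduce to the i.i.d.\ leaf model as $\epsilon\to 0$, set up the Fitch recursion $(a_{k+1},b_{k+1},c_{k+1})$ on the balanced tree, observe that the difference $d_k=a_k-b_k$ satisfies $d_{k+1}=d_k(1+c_k)$, use monotone bounded convergence of $d_k$ to force $c_k\to 0$, and finish with a fixed-point argument. The only cosmetic differences are that the paper indexes from level $1$ rather than the leaves, and closes via $a_\infty=a_\infty^2,\ b_\infty=b_\infty^2$ rather than via the $c$-recursion; your treatment of the $\epsilon\to 0$ reduction and the order of limits is in fact more explicit than the paper's.
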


If we were to replace the completely balanced tree with a tree topology selected from the uniform distribution on rooted trees (the so-called `PDA' distribution) then the resulting predictive accuracy of MP has a limit that is strictly less than 1 (since a PDA tree has a positive probability ($\approx 0.5$) of having a leaf adjacent to the root).  Alternatively, replacing the completely balanced tree with a 
Yule-Harding topology appears to lead to similar limiting behaviour as for the completely balanced tree, though we do not have a rigorous proof of this claim.  In any case, the convergence in the Yule-Harding setting is quite slow. For example, with $n=1000$ and $1-p=0.55$, the predictive accuracy of MP for estimating the root state for a Yule-Harding topology can be calculated
exactly (by the recursion described in the Online Appendix 3) and it turns out to be around 0.69; by contrast, for MR, the predictive accuracy is more than 0.99.

\newpage

\section{Mathematical results II: Expected accuracy under Yule trees}

The examples described so far that exhibit the limits of predictability and unpredictability involve trees that are in some sense `extreme' cases.  Thus it is pertinent to ask what we should expect for `typical' phylogenetic trees.  This requires specifying some model for generating a tree and branch lengths, and in evolutionary biology the simplest such model is the Yule pure-birth model.  Despite the simplicity of this model, it nevertheless provides a reasonable approximation to the shape of empirical evolutionary trees \citep{mcp}.
In  this section, we study the predictive accuracy of ancestral state reconstruction in a Yule tree that is grown either for a fixed time $t$ (in which case the number of leaves is a random variable) or is sampled when it has $n$ leaves (in which case the tree height is a random variable).   We are interested in limiting results (i.e., what happens as $t$ or $n$ becomes large?).  The predictive accuracy for the state at the root node, or at a randomly selected internal node is dependent on the ratio of two parameters, the speciation rate (in the Yule model) and the substitution rate.  

 Our main result seems at first somewhat  surprising -- in some regions of parameter space the root state cannot be inferred accurately, yet a randomly selected internal node can be.    Before stating this result more formally, we first need to define precisely what we mean for predictive accuracy when the tree $T$ is randomly generated.
 Let $\T$ denote a random rooted phylogenetic tree generated by the Yule model.  Consider any ancestral state reconstruction method $M$. Then conditional on $\T=T$ 
  the predictive accuracy for estimating the state at the root of the tree is   $PA_M(T, \rho)$.   Thus the  predictive accuracy of $M$ for root state inference of a Yule tree is: 
   $$PA_{M;Y}(\rho) := \EE[PA_M(\T, \rho)]$$ 
  where $\EE$ refers to expectation with respect to the randomly generated Yule tree $\T$.
  
 Similarly, the  predictive accuracy of $M$ for inferring the state at an internal node selected uniformly at random in a Yule tree is denoted $\overline{PA}_M(T)$, and so:
  $$\overline{PA}_{M;Y} := \EE[\overline{PA}_M(\T)].$$ 

We begin by summarizing known results for the special case of the symmetric two-state model. 

\begin{proposition}
\label{secondthm}
\mbox{ } 

Consider a Yule tree grown for time $t$ (from a single lineage), and a two-state symmetric stationary substitution process on the tree.  Then, as $t$ grows:
\begin{itemize} 
\item[(1)]  the predictive accuracy of MP for estimating the root state of the tree converges to the trivial bound ($\frac{1}{2}$) if and only if the
speciation rate is less than {\bf six} times the substitution rate;
\item[(2)] the predictive accuracy of any method for estimating the root state converges to the trivial bound if the speciation rate is less than {\bf four} times the substitution rate.
\end{itemize}
\end{proposition}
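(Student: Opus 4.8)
\emph{Proof proposal.} The plan is to treat the two parts separately, since they draw on different tools: Part (2) follows from the information-theoretic machinery already assembled, whereas Part (1) requires a direct analysis of the Fitch--Hartigan recursion on a random Yule tree.

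For Part (2), I would combine Lemma~\ref{rab} with the bound~(\ref{eva2}). A Yule tree grown from a single lineage for time $t$ is ultrametric of height $t$, so~(\ref{eva2}) applies conditional on the realised tree, giving $I(X(\rho);X_\L)\le N\,e^{-4qt}$, where $N$ is the (random) number of leaves, $q$ the substitution rate, and $\lambda$ the speciation rate. Taking expectations over the Yule tree and using $\EE[N]=e^{\lambda t}$ yields $\EE[I(X(\rho);X_\L)]\le e^{(\lambda-4q)t}$. Since $x\mapsto\sqrt{x}$ is concave, Jensen's inequality applied to Lemma~\ref{rab} gives $PA_{M;Y}(\rho)=\EE[PA_M(\T,\rho)]\le \pi+\sqrt{\tfrac12\,\EE[I]}\le \pi+\sqrt{\tfrac12}\,e^{(\lambda-4q)t/2}$, which tends to $\pi=\tfrac12$ whenever $\lambda<4q$. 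This holds for every method $M$, establishing Part (2).

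For Part (1), I would set up a branching recursion for the two-state Fitch algorithm. Conditioning on the true root state being $0$, let $a(\tau),b(\tau),c(\tau)$ denote the probabilities that the Fitch set at the top of a Yule subtree of age $\tau$ equals $\{0\}$, $\{1\}$, $\{0,1\}$ respectively. Examining the top $d\tau$ of the subtree (a substitution occurs with probability $q\,d\tau$, a speciation with probability $\lambda\,d\tau$ producing two i.i.d.\ age-$\tau$ subtrees) and applying the Fitch intersection/union rule to the daughter sets leads, to first order, to a closed system of ODEs in $\tau$. The quantity that governs predictive accuracy is the bias $m:=a-b$, because MP resolves ties uniformly, so the accuracy at the root equals $a+\tfrac12 c=\tfrac12(1+m)$; thus the accuracy tends to $\tfrac12$ exactly when $m\to 0$. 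A short computation collapses the system to $\frac{dm}{d\tau}=(\lambda c-2q)\,m$, together with an equation for $c$ whose symmetric ($m=0$) reduction is $\frac{dc}{d\tau}=\tfrac{\lambda}{2}(1-c)(1-3c)$.

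The threshold then emerges from a linear stability analysis at the symmetric fixed point. The $c$-equation has a unique stable equilibrium $c^\ast=\tfrac13$ (the unstable one being $c=1$), so along the symmetric manifold the tie probability converges to $\tfrac13$ and the linearised bias equation becomes $\dot m=(\lambda c^\ast-2q)m=(\tfrac{\lambda}{3}-2q)m$. Hence $m\to 0$ (accuracy $\to\tfrac12$) precisely when $\tfrac{\lambda}{3}<2q$, i.e.\ $\lambda<6q$, which is the claimed factor of six. I expect the main obstacle to be the converse (``only if'') direction: when $\lambda>6q$ the fixed point $m=0$ is unstable, and one must verify that the nonlinear dynamics carries $(a,b,c)$ to a genuine attracting state with $m>0$, so that the accuracy stays bounded above $\tfrac12$; the borderline case $\lambda=6q$, where the linear term vanishes and the $-m^2$ feedback into the $c$-equation must be tracked, also needs separate care. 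A secondary technical point is to justify the passage from the discrete random tree to the deterministic ODEs (averaging the branching over the Yule process) and the interchange of the $t\to\infty$ limit with the expectation defining $PA_{M;Y}(\rho)$. Since the proposition only summarises known results, I would, where possible, quote the established statements rather than reprove these limiting facts from scratch.
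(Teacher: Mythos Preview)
Your proposal is correct and, for Part~(2), follows exactly the paper's argument: bound $I(X(\rho);X_\L)$ by $N_t e^{-4qt}$ via~(\ref{eva2}), take the Yule expectation using $\EE[N_t]=e^{\lambda t}$, and conclude decay when $\lambda<4q$; your explicit invocation of Jensen's inequality to pass the expectation through the square root in Lemma~\ref{rab} is a nice extra detail the paper leaves implicit.

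For Part~(1) there is a mild mismatch worth noting: the paper does not prove this part at all but simply cites \citet{gas}, where the result is established. Your ODE sketch---tracking the bias $m=a-b$ via $\dot m=(\lambda c-2q)m$, identifying the stable tie-probability equilibrium $c^\ast=\tfrac13$, and reading off the threshold $\lambda=6q$ from linear stability---is precisely the mechanism behind that cited result, so you are reconstructing the referenced proof rather than departing from it. Your closing remark that you would ``quote the established statements rather than reprove these limiting facts from scratch'' is therefore exactly in line with what the paper does. The caveats you flag (the ``only if'' direction requiring a nonlinear/global argument, the borderline $\lambda=6q$, and the passage from the discrete branching recursion to the deterministic ODE) are genuine, and are handled in \citet{gas}; citing that reference disposes of them.
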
  

The proof of part (1) is from \citet{gas}.  For Part (2), 
Inequality (\ref{eva2}) applies once we condition on the number of leaves of  the Yule tree. Consequently,  an upper bound on 
the mutual information between the root state and the leaf states is the expected value of $N_t \exp(-4qt)$ where $N_t$ is the number of leaves in a Yule tree grown for time $t$.
Moreover, the expected value of $N_t$ is $\exp(\lambda t)$, where $\lambda$ is the speciation rate.  Thus the predictive accuracy decays exponentially fast to zero if $\lambda < 4q$ (this result was described further in \citet{li}).

We now turn to the main result in this Section, which formalizes the notion that it is easier to predict the state at a randomly-selected node in a Yule tree, than the root state.

\begin{theorem}
\label{next2thm}
\mbox{ }

\begin{itemize}

\item[(1)] With a Yule tree $\T$, and any GTR process,  the accuracy of any method in predicting the state at the root (i.e. $PA_{M;Y}(\rho)$) converges to the trivial bound as $n$ grows (or as $t$ grows), when the mutation rate passes a particular threshold dependent on the speciation rate.

\item[(2)] With a Yule tree $\T$,  and any conservative GTR substitution process there exists a (very simple) method $M$ for which the  accuracy of predicting the state at a randomly selected node (i.e. $\overline{PA}_{M;Y} $) does not converge to the trivial bound as $n$ (or $t$) grows,  for any fixed mutation rate.
\end{itemize}

\end{theorem}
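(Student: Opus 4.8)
The two parts pull in opposite directions, so I would treat them with complementary tools: Part~(1) is an impossibility statement about the root, naturally handled by the information-theoretic upper bound of Lemma~\ref{rab}, whereas Part~(2) is an achievability statement about a random node, calling for an explicit, easily analyzed method together with a structural fact about Yule trees.

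For Part~(1), the plan is to generalize Inequality~(\ref{eva2}) from the Neyman two-state model to an arbitrary GTR process. Write $g>0$ for the spectral gap of the rate matrix, i.e. the absolute value of its largest non-zero eigenvalue; for a reversible channel of length $s$ the relevant contraction factor is $e^{-gs}$, which reduces to the two-state fidelity $e^{-2qs}$ (so $g=2q$ there). Extending the argument of \citet{eva} to this setting---bounding the information each leaf carries about the root by the squared contraction $e^{-2gs}$ accumulated along its root-to-leaf path and summing over leaves---gives, for a tree of height $t$ with $N_t$ leaves, a bound of the form $I(X(\rho);X_\L)\le C\,N_t\,e^{-2gt}$ for a model-dependent constant $C$. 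Conditioning on $N_t$ and using $\EE[N_t]=e^{\lambda t}$ (with $\lambda$ the speciation rate) yields $\EE[I(X(\rho);X_\L)]\le C\,e^{(\lambda-2g)t}$, which tends to $0$ precisely when $\lambda<2g$. Feeding this into Lemma~\ref{rab} forces $PA_{M;Y}(\rho)\to\pi$ for every method $M$ once the mutation rate is large enough that $2g>\lambda$; the threshold is $2g=\lambda$, which for the two-state model recovers the factor of four in Proposition~\ref{secondthm}(2), and the conclusion for ``$n$ grows'' follows since conditioning on $N_t=n$ merely fixes the leaf count. I expect the main obstacle to be the tensorization step: unlike the two-state case of \citet{eva}, showing that the per-leaf contributions combine correctly for a general channel on a branching tree requires a strong-data-processing / second-singular-value argument rather than a direct computation.

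For Part~(2), I would exhibit the following trivially simple method $M$: to estimate the state at an internal node $v$, descend from $v$ along (say) left children until reaching a leaf $\ell(v)$, and predict $X(v)$ to be the observed state $X(\ell(v))$. Since a conservative model has a uniform stationary distribution (as established after~(\ref{ineqpi})), stationarity gives $PA_M(v,T)=\frac{1}{r}\sum_i \pi_{ii}(d(v,\ell(v)))$, where $d(v,\ell(v))$ is the path length; by the forward inequality~(\ref{ineqpi}) each diagonal term exceeds $1/r$, so $PA_M(v,T)>\pi$ at every node, and moreover $\frac{1}{r}\sum_i\pi_{ii}(s)$ is bounded away from $\pi=1/r$ uniformly for $s\le\epsilon$. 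It therefore suffices to show that a uniformly random internal node is, with probability bounded below independently of $t$ and $n$, within distance $\epsilon$ of a descendant leaf. This is supplied by the cherry structure of Yule trees: a positive fraction of internal nodes are cherry parents, and among these a positive fraction have pendant edges of length at most $\epsilon$ (recent speciations produce short cherries, and the limiting pendant-edge length has positive density near $0$). Averaging then gives, on each realization, $\overline{PA}_M(\T)\ge \frac{1}{r} + f\big(\tfrac{1}{r}\sum_i\pi_{ii}(\epsilon)-\tfrac{1}{r}\big)$, where $f$ is the fraction of short cherries; since $\EE[f]$ is bounded below by a positive constant, $\overline{PA}_{M;Y}$ stays strictly above the trivial bound for all $t$ (or $n$).

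The one point to handle carefully in Part~(2) is that $f$ is a ratio of two random quantities (short cherries over internal nodes), so $\EE[f]$ is not literally a ratio of expectations; I would settle this with the standard concentration of cherry and leaf counts in large Yule trees, which makes $f$ converge in probability to a positive constant and hence keeps $\EE[f]$ bounded below. Apart from this, Part~(2) is essentially bookkeeping once the cherry estimate is in hand, so the genuine difficulty of the theorem lies in the channel-tensorization generalization of \citet{eva} required for Part~(1).
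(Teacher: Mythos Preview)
Your overall strategy is sound for both parts, but in each case the paper takes a shorter path that sidesteps the very step you flag as difficult.

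\textbf{Part (1).} The paper does \emph{not} go through mutual information and Lemma~\ref{rab}. Instead it invokes a direct predictive-accuracy bound from \citet{moste} (their Eqn.~(14.8)): for any method $M$ and any GTR process,
\[
PA_M(\rho,T)\;\le\;\pi+\sum_{v\in\L}\exp(-q\,t(v)),\qquad q=\sum_j\min_{i\neq j}q_{ij}.
\]
On an ultrametric tree of height $t$ this is just $\pi+N_t e^{-qt}$, so taking expectations gives $PA_{M;Y}(\rho)-\pi\le e^{(\lambda-q)t}$, vanishing once $\lambda<q$. This completely avoids the tensorization of \citet{eva} to general GTR that you correctly identify as the main obstacle in your route; the Mossel--Steel bound already holds channel-by-channel via a total-variation argument, so no strong-data-processing step is needed. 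Your spectral-gap threshold $\lambda<2g$ would be sharper if established, but the theorem only asks for \emph{some} threshold, and $\lambda<q$ suffices with far less work. Separately, your handling of the ``$n$ grows'' regime is too quick: writing ``conditioning on $N_t=n$ merely fixes the leaf count'' conflates two different samplings. When the tree is stopped at $n$ leaves the height $\tau_n$ is random, and one must control $n\,\EE[e^{-q\tau_n}]$; the paper does this explicitly using the geometric law of $N_t$ to bound $\PP(\tau_n\le t_n)$ for a suitably chosen $t_n\sim (1-\delta)\ln(n)/\lambda$.

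\textbf{Part (2).} Your method (copy the state of a descendant leaf) is exactly the one the paper uses, but the paper avoids both the $\epsilon$-threshold and the cherry-concentration argument. Two simplifications: (i) in \emph{any} rooted binary tree at least half the internal nodes are adjacent to a leaf (a purely combinatorial fact, no randomness needed), so the ``good'' fraction is deterministically $\ge\tfrac12$; (ii) rather than truncating at a fixed $\epsilon$ and invoking concentration of short-cherry counts, the paper keeps the pendant length $\theta$ random, uses the spectral lower bound $p_{ii}(\theta)\ge\pi+(1-\pi)e^{-b_r\theta}$, and then applies Jensen to the convex map $\theta\mapsto e^{-b_r\theta}$ together with the known expected pendant length $\EE[\theta]=1/(2\lambda)$ \citep{moo}. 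This yields directly
\[
\overline{PA}_{M;Y}\;\ge\;\pi+\tfrac{1}{2}(1-\pi)\exp\!\bigl(-b_r/(2\lambda)\bigr),
\]
with no ratio-of-random-variables issue to resolve. Your argument can be completed, but it is strictly more work for the same conclusion.
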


The formal proof of this theorem is provided in the Appendix, but here we offer some informal comments as to the underlying intuition behind the claims. Part (1) is similar to the statement of part (1) of the Proposition~\ref{secondthm} but it differs from it in  an important way: the previous theorem was restricted to the two-state symmetric model, while here we are dealing with more general processes (the price we pay for the extra generality is that the bounds obtained are weaker). 

Turning to Part (2), a key observation is that in  any rooted binary tree at least half of the internal nodes are adjacent to at least one leaf, and the lengths of the pendant edges they are incident with have expected length of
$1/2\lambda$ on average \citep{moo}.   Thus, for at least half the internal nodes that are adjacent to a leaf, selecting the state of that leaf as an estimate leads
to  non-vanishing predictive accuracy, regardless of the speciation and substitution rates, and $n$. 
\hfill$\Box$

The simplicity of this method makes the proof easy; more sophisticated, realistic approaches are more accurate, typically ML with uniform priors or maximum posterior probability estimation.  In other words, the (positive) result in Part (2) stands for a large variety of methods, while the (negative) result in Part (1) applies for all possible methods.

\section{Simulation results}

Theorem~\ref{next2thm} states that for any method, the predictive accuracy of reconstructing the tree root vanishes as $n$ (or $t$) grows, when the speciation/mutation rate ratio passes below a particular threshold. With the two-state symmetric model, this threshold is equal to 4 (for any reconstruction method, e.g. ML), while it is equal to 6 for parsimony. 
Moreover, Theorem  \ref{next2thm} shows that the accuracy of a very simple method in predicting a randomly selected internal node does not vanish as $n$ (or $t$) grows, when this rate ratio is fixed. Due to the simplicity of the reconstruction method used in this proof, this result still holds for more sophisticated approaches, such as those  based on parsimony or likelihood. However, Theorem ~\ref{next2thm} does not provide any quantification. We do not know how quickly the ability to reconstruct the tree root vanishes or the  extent to which the internal nodes can be reconstructed. This section uses computer simulations to answer these questions. We first simulate Yule trees with variable numbers of tips and speciation/mutation rate ratios, and assess the accuracy of a number of reconstruction methods based on majority, parsimony and likelihood. The results show that reconstructing internal nodes is indeed much easier than reconstructing the tree root. In a second series of simulations, we show that these results still holds when using more realistic, non-molecular clock trees and a standard substitution model for DNA sequences.

\subsection{Yule trees and the two-state symmetric model}

We generated Yule trees with $n = 10, 100$ and 1,000 tips. Next,  binary (0/1) sequences of length 50 were randomly generated and evolved along the tree using the symmetric Neyman model. The speciation/substitution rate ratio was equal to 1, 2, 3, 4, 5, 6, 8, 12 and 20, thus having a focus on the 4--6 region where the accuracy of the various methods is expected to drop and be clearly different from one method to another.

We compared a number of reconstruction methods:
\begin{itemize}
\item
{\bf Majority}: For any given node $v$, we select the majority state among $v$'s  descendants; in the case of a tie, we randomly select 0 or 1, with equal probability. This method uses partial information when predicting a non-root node $v$, as only the descendants of $v$ are accounted for. Moreover, the predictions are made independently for each of the tree nodes.

\item {\bf Parsimony}: We have the choice among several options:
\begin{itemize}
\item
Parsimony-Down: just as with Majority, we only look into the subtree rooted with $v$, using the standard Fitch-Hartigan algorithm \citep{fit, har}. Due to partial information and independence among node predictions, we do not expect high accuracy using Parsimony-Down.
\item
Parsimony-Acctran and Parsimony-Deltran: we use now all tips to predict the ancestral state of any given internal node, and thus expect better results than with Parsimony-Down (and Majority). Acctran and Deltran are two heuristic procedures \citep{mad2, swo}. Both select one most parsimonious global change scenario (among many, for most data-sets). Acctran means ``accelerated transformations" and favours mutations close to the tree root, thus tending to avoid convergent mutations, but accepting reverse mutations. Deltran means ``delayed transformations" and favours mutations close to the tips, thus preferring convergent mutations rather than reverse mutations. Acctran is typically used with morphological characters (convergent evolution is then unlikely), while Deltran is often used with geographic annotations (ÒconvergentÓ evolution then means multiple introductions into some country or region and is quite possible; see e.g.  \citet{wal}).
\item
Parsimony-Independent: we compute all most parsimonious state assignments for all nodes, and then select one of the two states (independently and with equal probability $0.5$)  for every ambiguous node. Computations are performed using the DownPass algorithm that is described in the MAC Clade user guide \citep{mad2}. DownPass is faster but equivalent to re-rooting the tree on every node $v$ and running the Fitch--Hartigan procedure, with a slightly modified last step because $v$ has now three root descendants. With this procedure, the selected scenario may not be one of the most parsimonious scenarios, and it is easy to see with examples that it may be not parsimonious at all. This actually occurs for most of the data sets generated in this study. We tested this (non-standard) procedure to assess the importance of the dependence among predictions and to evaluate  ML approaches, where we have a similar choice between using the most likely joint scenario or the marginal likelihood of each of the nodes (see below).
\end{itemize}
\item
{\bf Maximum likelihood}: We have similar options as we have with parsimony:
\begin{itemize}

\item
Likelihood-Down: just as with Majority and Parsimony-Down we only look into the subtree rooted with $v$ using the standard pruning algorithm \citep{fel}, and then select the most likely state. We thus use partial information and node predictions are independent (actually the dependence is not explicit, contrary to the following option).

\item
Likelihood-Best: we use the dynamic programming algorithm proposed by \citet{pup} to infer the most likely joint change scenario.
\item
Likelihood-Marginal: to (independently) select the most likely state for every node $v$, we use the marginal probabilities obtained using the pruning algorithm after re-rooting the tree on $v$ (with an easy adaptation for the last three-descendant step).

\end{itemize}

\end{itemize}

We measured the accuracy of all these methods in reconstructing the ancestral state at: (1) the tree root and (2) any randomly selected internal node (including the tree root).   
These accuracy measures are simply the proportions of correct root and node predictions in the simulated data.   We also measured (3) the method accuracy
in reconstructing the changes that occurred along the tree branches. Let $e$ be a randomly selected branch, and let $i$ and $j$ be the states observed during simulations at both ends of $e$ for a given site; the change between $i$ and $j$ is correctly reconstructed when the studied method correctly reconstructs both $i$ and $j$ (note that $i$ may be equal to $j$).  With pendant branches, the leaf state is known, and thus only interior branches are accounted for in this measure.

This `branch accuracy'  measure is used to assess and compare the performance of independent/dependent methods.   Let us assume that node prediction has a success probability of $p$ and that it is independent from one node to another (as we basically expect with Parsimony-Independent and Likelihood-Marginal), then the  success of reconstructing the changes along the internal branches should be equal to $p^2$.  If we now assume that the prediction successes are fully correlated at both  ends of $e$, that is, both predictions are simultaneously correct or wrong, then the expected success of predicting the change along $e$ is equal to $p$; for example, with $p$=0.7, we expect values of  0.49 and 0.7, respectively with independent and perfectly concerted predictions, that is, very different accuracies.

The three accuracy measures (root, node and branch) are averaged over the 500 trees, 50 sites, and $(n-1)$ internal nodes or $(n-2)$ internal branches, for each of the speciation/substitution rate ratios and $n$ values.  The results are displayed in Figure~\ref{simulate} and in the Online Appendix 4. The main findings are as follows:

\begin{figure}[!ht]
\begin{center}
\includegraphics[scale=.8]{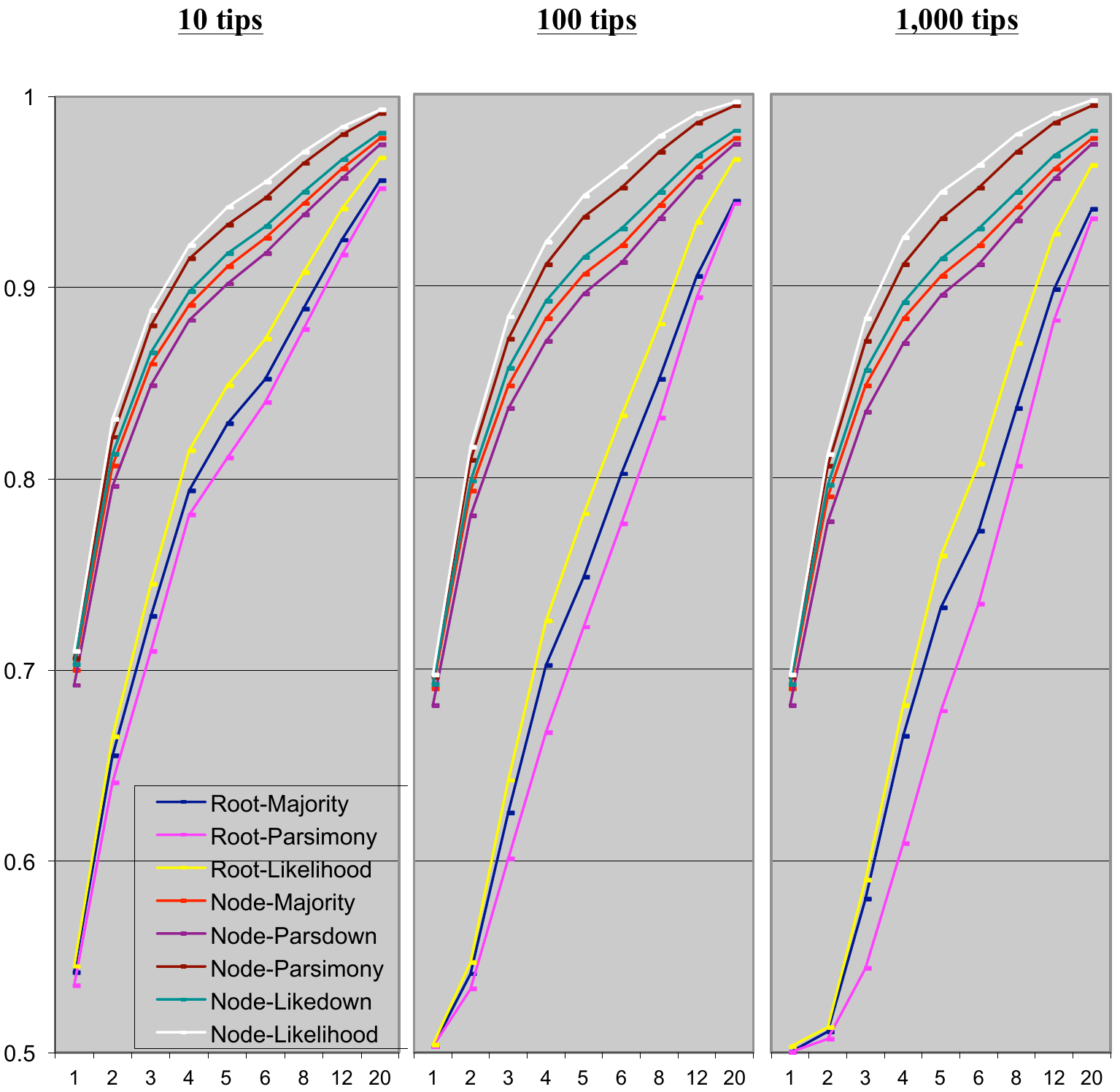}
\end{center}\centering
\caption{A comparison of the accuracy of various methods for reconstructing the root state (denoted ÒRootÓ in the legend) and any randomly selected interior node (denoted ÒNodeÓ). Majority, Parsdown and Likedown only use the descendants of the node to be predicted, while Likelihood and Parsimony use all tips in their predictions (this distinction does not hold when predicting the tree root). All parsimony methods (except Parsimony-Down, referred to as Parsdown for short) have very similar results, so here we use  Parsimony-Acctran in Root- and Node-Parsimony. Similarly, we use Likelihood-Best \citep{pup} in Root- and Node-Likelihood. X-axis: speciation/substitution rate ratio; Y-axis: proportion of correct predictions; 500 trees with sequences of length 50 were used for each of the conditions (number of tips and rate ratio).} 
\label{simulate}
\end{figure}

\begin{itemize}
\item
Regarding root prediction, the results are congruent with \citet{gas}. Likelihood is best, as expected. Majority is better than Parsimony and is surprisingly accurate, despite its simplicity and the fact that it does not use the tree topology. However, these results should not be overemphasized. In many cases (e.g. with morphological or geographical characters), we may have some sampling bias in the number of occurrences of some of the character states, in which case Majority is expected to perform poorly. For example, when half of the tips with a given character state (say 0) are not sampled, while all other tips (those having 1) are sampled, the accuracy of all methods drops but Parsimony becomes better than Majority; with 100 tips and a speciation/mutation rate ratio of 6, the accuracy of reconstructing the tree root is 0.70, 0.74 and 0.79 for Majority, Parsimony and Likelihood, respectively, against 0.80, 0.77 and 0.83 with no sampling bias. With this (strong) sampling bias, Majority is thus more affected  than Parsimony and Likelihood, which both appear remarkably robust.

\item
As expected from Theorem~\ref{next2thm}(1), the performance of all methods in predicting the tree root drops down when the number of tips increases. For example, with a rate ratio of 4, the accuracy is $\sim$0.8, $\sim$0.7 and $\sim$0.65 with $n = 10$, $n=100$ and $n=1000$, respectively. Moreover, with $n = 1000$ and a rate ratio of 1 and 2, the accuracy of all methods is nearly the same as that of random predictions, while with a rate ratio of 5, where Parsimony is expected to be poor (see above), we see a clear gap between this method and the two others. Actually, based on these simulations, it seems that the transition value for Majority should be less than 6 (as it is for Parsimony) and may possibly be 4 (the best possible value, attained by Maximum Likelihood). 
\item
However, this decrease in the accuracy of predicting the tree root is rather slow; for example, with a rate ratio of 3, where all methods should become analogous to random guessing (accuracy = 0.5) with large $n$, we still see a clear signal with $n = 1000$. Moreover, with large rate ratios, the decrease is even slower; for example, with a rate ratio of 20, we see very little difference between $n = 10$ and 1000.
\item
If we now compare the accuracy of reconstructing interior nodes, the performance of methods that use all the tips is nearly the same, disregarding whether the predictions are done simultaneously or independently. Specifically, Parsimony-Acctran, Parsimony-Deltran and Parsimony-Independent have nearly identical accuracy, and the same holds for Likelihood-Best and Likelihood-Marginal (see Online Appendix 4). This somewhat surprising finding does not come from the simulation protocol. For example, with Parsimony, there is a large number of nodes (up to 30\% with high mutation rates) with an ambiguous ancestral annotation, meaning that Parsimony-Independent produces highly suboptimal scenarios (but relatively accurate node predictions!). Moreover, when examining the accuracy in reconstructing the changes along the tree branches, we see (Online Appendix 4) that independent and dependent methods perform nearly identically. Actually, we see a slight advantage (of at  most 2\%) for Parsimony-Acctran and Parsimony-Deltran over Parsimony-Independent, while Likelihood-Best and Likelihood-Marginal are almost undistinguishable (the former is possibly a bit better than the latter with 10 tips and the opposite may hold with 1000 tips). Moreover, the accuracy value of both independent and dependent approaches is just slightly above $p^2$, where $p$ is the node accuracy, meaning that all of these methods perform nearly the same as if they were achieving independent predictions from one node to another. This is obviously not true for dependent methods like the approach of \citet{pup},  but their accuracy has not been improved so far (their interpretability likely is). Thus, in the following and in Figure~\ref{simulate}, we present and discuss the results of Parsimony-Acctran (the most standard parsimony option) and Likelihood-Best (the most rational option with maximum likelihood in this context). For the sake of conciseness, these methods are simply referred to as Parsimony and Likelihood.
\item
The two best methods for predicting interior nodes are Likelihood and Parsimony, in this order, and the difference between both is surprisingly small ($\sim$1\% or less in all conditions). This contrasts with root prediction, where the gap is much higher (up to $\sim$8\%). Methods using descendant tips only are clearly behind, with Likelihood-Down being the best of these, Parsimony-Down the worst and Majority in between, as expected from the results on root prediction.
\item
The accuracy of all methods in predicting interior nodes is remarkably similar regardless of the number of tips. It is difficult to see any difference between $n = 100$ and 1000, and the results with $n$=10 are neither worse nor better than with $n = 100$ and $n=1000$. This finding is most likely to be explained by the fact that in a Yule tree, the subtrees are also Yule trees, meaning that in large Yule trees, most of the nodes are contained in small Yule trees with only a few tips. We thus observe very fast convergence of the node prediction accuracy for all methods and conditions, which contrasts with the slow degradation of performance when reconstructing the tree root, which we discussed above (especially for high rate ratios).
\item
The main result from these simulations is that there is a large gap in accuracy when predicting the tree root and interior nodes, especially with low rate ratios where root prediction accuracy vanishes for all methods. For example, continuing above example with a rate ratio of 4, the accuracy in reconstructing interior nodes is $\sim$0.9 for all methods and  values of $n$. In other words, reconstructing the interior nodes is (relatively) easy, while reconstructing the tree root is difficult most of the time and is just impossible for large values of $n$ and small rate ratios. We will show in the next section that this statement still holds when using more realistic, non-molecular trees and a standard DNA substitution model 
\end{itemize}


\subsection{Non-molecular-clock trees and the HKY+$\Gamma$ substitution model}

We reuse here the simulation protocol of \citet{gas}, where we compared the accuracy of various methods in predicting the root state. Similar simulations were used to benchmark the topological accuracy of a large variety of tree building programs \citep{gui,des}, and  their features and parameterizations may be seen as biologically realistic. Here we summarize the main components of this protocol; additional explanations and justifications can be found in the previously mentioned references.

We first generated a Yule tree with $n = 25, 50, 100, 200$ and 400 leaves. This molecular-clock tree was then perturbed by multiplying every branch length (independently) by $(1 + X)$, where $X$ was an exponential variable with parameter 0.5. The factor $(1+X)$ was used (as opposed to, say, $X$) to avoid an excessive number of very small branches. The observed departure from the molecular clock, as measured by the ratio between the longest and shortest root-to-leaf lineages, was equal to $\sim$3.5 on average, a value that is typical in published phylogenies. Finally, the whole tree was rescaled so that the average root-to-leaf distance was uniformly distributed between 0.1 (relatively low divergence) and 1.0 (high divergence). In the previous set of simulations, the height of the tree was increasing with $n$, but here we assume that the average height is kept constant, while we increase the taxon sampling density. We thus expect that the larger the $n$, the more accurate the various methods will be in reconstructing the ancestral root and interior node states. We generated 500 trees using this procedure for each tree size $n$.

DNA-like sequences of 100 sites were evolved along these trees using the HKY model \citep{has} with a transition/transversion rate ratio ($\kappa$) of 4.0 (the default value in most software) and the equilibrium frequencies of A, C, G and T being equal to 0.15, 0.35, 0.35 and 0.15. This HKY model was combined with a discrete gamma distribution of parameter 1.0 with six rate categories. We thus obtained 500 data-sets of 100 sites for each tree size $n$.

Tested methods were essentially the same as in previous simulation study.  However, the \citet{pup} algorithm (Likelihood-Best) is not able to cope with the gamma model of site rates and was not used here. Moreover, the Maximum Likelihood approach uses a number of parameters (branch lengths, nucleotide frequencies, transition/transversion rate ratio, gamma distribution) which are not used by other approaches (Majority and Parsimony) and are generally not (or approximately) known in practical cases. Thus, we used the Maximum Likelihood approach under three settings: 
\begin{itemize}
\item
Likelihood-Down: just as in previous simulations, this method uses only the descendants of the node to be predicted, combined with  HKY+$\Gamma$6 and a complete knowledge of the model parameters.
\item
Likelihood-Marginal: again, we use the same approach as in previous simulations, based on the marginal distribution of state probabilities for every tree node, combined with HKY+$\Gamma$6 and a complete knowledge of the model parameters. This approach somehow provides the ``best possible'' result that can be obtained with our data sets.
\item
Likelihood-Marginal-JC: we used again the same marginal probabilities, but assumed the simple Jukes--Cantor (JC) model which ignores the differences in nucleotide frequencies and transition/transversion rates which were employed to generate the data. Moreover, as this JC model was used without a gamma distribution of site rates, the branch lengths were only approximate. This approach is thus expected to provide a more realistic view of maximum likelihood performance, compared to the performance of  parsimony that is based on similar simplifying assumptions, but does not use (even only approximate) branch lengths.
\end{itemize}

The same accuracy criteria were used as in previous experiments. All results are displayed in Figure \ref{fig_4} and in the Online Appendix 5.  The main findings are as follows (some are similar to what we already observed in previous simulations and are merely  summarized):

\begin{figure}[!ht]
\begin{center}
\includegraphics[scale=0.8]{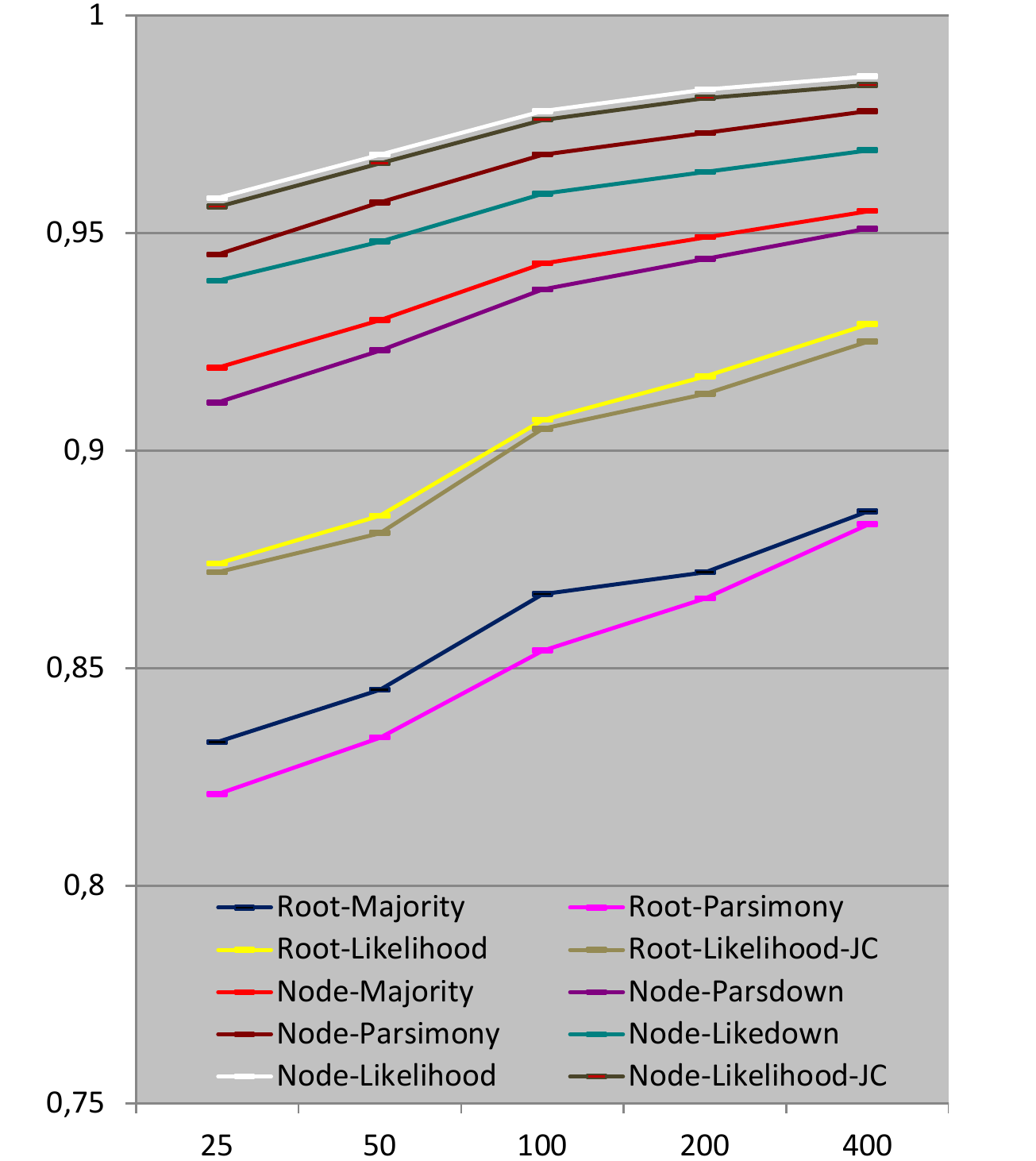}
\end{center}\centering
\caption{
Accuracy of the methods with non-molecular-clock trees and a HKY+$\Gamma$ substitution model.
Note: Root- and Node-Likelihood results are obtained using Likelihood-Marginal with HKY+$\Gamma$6. Root- and Node-Likelihood-JC are based on the same method but use JC model instead. Node- and Root-Parsimony use the Acctran algorithm. Majority, Parsdown and Likedown only use the descendants of the node to be predicted; the latter is launched with HKY+$\Gamma$6. X-axis: $n$, number of tips; Y-axis: proportion of correct predictions; 500 trees with sequences of length 100 were used for each value of $n$.}
 \label{fig_4}
\end{figure}

\begin{itemize}

\item
Majority performs better than Parsimony-Down, both in reconstructing the root state and the  interior nodes.
\item
The three all-tips parsimony methods (Acctran, Deltran and Independent) perform nearly identically, with a slight (probably  non-significant) advantage for Acctran, the results of which are displayed in Figure \ref{fig_4}.
\item
To predict the state of interior nodes, methods using all tips are clearly better than methods using node descendants only; for example, Parsimony-Acctran is better than Likelihood-Down, despite the fact that the latter uses a complete knowledge of the substitution model (branch lengths, nucleotide frequencies, etc).
\item
Again, we do not see any significant difference between methods performing dependent and independent predictions (see the results of Acctran, Deltran and Parsimony-Independent in the Online Appendix 5). Moreover, for all methods, the accuracy in reconstructing the changes along the tree branches is roughly equal to the square of the node accuracy, just as if predictions were made independently at both branch extremities (Online Appendix 5).
\item
Again, we see a large gap between predicting the root and predicting the interior node states. Notably, the gain with Parsimony is 10\% or more.  Despite its simplicity, Parsimony appears to be quite accurate at predicting the interior node states and the changes along the tree branches.
\item
As expected, all method accuracies increase with $n$ (representing taxon sampling density) but the amelioration is relatively slow ($\sim$5\% for all methods when predicting the root state) compared to  augmentation of the taxon number (from 25 to 400).
\item
Lastly, the most surprising finding is the remarkable performance of maximum likelihood when used with the over-simplistic JC model.  The results are nearly the same as with HKY+$\Gamma$6  for both the tree root and interior nodes. Compared to Parsimony, this Maximum Likelihood approach is clearly better at predicting the root state, thanks to the use of (approximate) branch lengths, while Parsimony uses the tree topology only. This robustness is quite encouraging regarding the use of Maximum Likelihood with real data, and can be related to the robustness we have already observed with biased sampling (see above), and the apparent robustness of Maximum Likelihood methods regarding topological errors reported by \citet{han}.

\end{itemize}

\section{Concluding comments}

Ancestral state reconstruction based on a phylogenetic tree allows biologists to estimate where and when important  innovations (the gain, loss or change of some character sate) may have occurred in the evolutionary history of a set of taxa.  Such approaches are also useful in phylogeography (e.g. \cite{sla}; \cite{wal}) to study the evolution of epidemics and their movement and exchanges moving from one country to another.

In this paper we have applied mathematical methods, combined with simulations to quantify how reliable such predictions are likely to be under a variety of models of sequence evolution.  We showed that  predicting the state at a given node that is deep in the tree can be provably arbitrarily close to the trivial bound, even though for a randomly selected node in a Yule tree the accuracy is always lies a separated distance above the trivial bound. 

Yet, this story also has some twists  -- there are trees for which the root can be predicted more accurately than any other node. And Majority Rule, which ignores the tree structure in estimating the root state, has an accuracy that is not much worse than other methods. This may go some way towards explaining the apparent  robustness of ancestral state prediction to the choice of tree, by \cite{han} who stated that ``incorporating phylogenetic uncertainty very rarely changes the inferred ancestral state and does not improve the accuracy of the reconstructed ancestral sequence."
Moreover the prediction methods are relatively robust against other factors such as model mispecification, sampling bias and approximate parameter and branch length values.

Among the three main methods we consider, ML, MP and MR, the first method (ML)  is the most accurate, but it also requires knowing the most about the tree and model. Regarding MP and MR we showed that the relative performance of each depends very much on the tree and its branch lengths, and that neither is universally better than the other. However on Yule trees MR tends to be slightly more accurate than MP  for estimating the root state, but less accurate at estimating randomly selected nodes in the tree than certain versions  of parsimony which account for all tree tips.

For future works, a fundamental mathematical question is to determine, for the 2-state symmetric model, whether the transition value of the speciation-to-substitution ratio for the asymptotic accuracy of Majority Rule is at the lowest possible value 4 or some value higher than 4 (as is the case for Parsimony, which has its transition at the ratio 6). Moreover, it would be interesting to further explore the prediction of the edges on which particular state changes occurred, and the accuracy of complete ancestral reconstruction in the tree. Regarding this last task, the only theoretical result known (described earlier) requires very strong assumptions, and is far from optimal. Moreover, our simulation results indicate that the dependence between predictions made at two edge extremities is not well accounted for by current approaches, which could be a route to design more accurate methods aimed at predicting the changes that occurred along the tree.

\section{Acknowledgments}  MS thanks the {\em Allan Wilson Centre for Molecular Ecology and Evolution} for helping fund this research.  OG was supported by the ANR project PhyloSpace and the PIA Institut de Biologie Computationnelle.


\bibliographystyle{sysbio}
\bibliography{Ancestral_Gascuel_Steel}

\newpage
\section{Appendix: Mathematical proofs (see also the Online Appendix)}

\subsection{Proof of Lemma~\ref{rab}}
\begin{proof}
Firstly observe that for any two discrete random variables $X$ and $Y$, the mutual information $I(X;Y)$ equals the Kullback-Leibler divergence $D(P_{X,Y}||P_XP_Y)$ between $P_{X,Y}$ (the joint probability distribution of $X$ and $Y$) and
$P_XP_Y$ (the corresponding product of marginal probability distribution, and which thereforetreats $X$ and $Y$ as independent). Now,  
Pinsker's inequality ({\em c.f. }\citet{cov}) states that for any two probability distributions $P$ and $Q$ we have:
$$\sqrt{\frac{1}{2}D(P||Q)} \geq \sup_A\{|P(A)-Q(A)|\},$$
where the supremum is over all events $A$. Combing this with the previous observation gives:
\begin{equation}
\label{sumar}
\sqrt{\frac{1}{2} I(X;Y)} \geq \sup_A\{|P_{X,Y}(A)-P_XP_Y(A)|\},
\end{equation}
Now, suppose that $M$ is any (deterministic or randomized) method for  predicting $X$ from $Y$, and let $E$ be the event that $M(Y)=X$
(note, we allow $M$ to be `random' as some methods result in ties that then are generally broken randomly).

Then, from (\ref{sumar}), we have: $$\sqrt{\frac{1}{2} I(X;Y)} \geq \PP_{X,Y}(E)-P_XP_Y(E),$$
and so, equivalently:
\begin{equation}
\label{basicineq}
\PP_{X,Y}(M(Y)=X) \leq \sqrt{\frac{1}{2} I(X;Y)}  +P_XP_Y(E).
\end{equation}
Now, $$P_XP_Y(E) = \sum_{x,y}P_XP_Y(X=x, Y=y)P(M(y)=x) = \sum_{x,y}P(X=x)P(Y=y)P(M(y)=x).$$
Thus if we let $a_x = P(X=x), b_y = P(Y=y)$ and $c_{xy} = P(M(y)=x)$,
then
\begin{equation}
\label{PXPY}
P_XP_Y(E) = \sum_{x,y} a_x b_yc_{xy}.
\end{equation}
Setting $A = \max_x \{a_x\}$,   Eqn. (\ref{PXPY}) now gives:
$$P_XP_Y(E) \leq A \sum_{x,y} b_y c_{xy} = A \sum_y b_y \sum_x c_{xy} = A \sum_y b_y = A,$$
where the second equality relies on the fact that for each $y$  we have $\sum_x c_{xy}=1$.

Thus,  since $A = \max_{x}P(X=x) = \pi$ we obtain $P_XP_Y(E) \leq \pi$, and  
applying this to (\ref{basicineq}) in the case where  $X=X(v$), $Y=X_\L$  we obtain the claimed inequality.

\end{proof}

\subsection{Proof of Theorem~\ref{next2thm}}

{\em Proof of Part (1):}   We exploit a result from \citet{moste} (Eqn. (14.8)) which says that for \underline{any}  method $M$ if we take $v=\rho$ (the root of a fixed tree $T$) then
\begin{equation}
\label{limiting2}
PA_M(\rho, T) \leq \pi+   \sum_{v \in \L} \exp(-qt(v)),
\end{equation}
where $t(v)$ is the sum of the branch lengths from the root to leaf $v$ and $q = \sum_j \min_{i \neq j} q_{ij}$, where $q_{ij}$ is the transition rate from state $i$ to state $j$.

We first consider what happens if we allow $t$ to grow (in which case the number of leaves at time $t$, $N_t$, is a random variable) as this is simpler than allowing $n$ to grow.
For a Yule tree of depth $t$ we have $t(v)=t$ for each leaf $v$. Thus, conditional on $N_t = n$, Eqn. (\ref{limiting2}) gives
$$\PP(M(X_\L, \rho) = X(\rho)|N_t = n) -\pi \leq n\exp(-qt),$$
and since $PA_{M;Y}(\rho)$ is the expected value of this quantity (with respect to the Yule model) we have:
$$PA_{M;Y}(\rho, T) -\pi  \leq \EE[N_t] \exp(-qt).$$
Now $\EE[N_t] = \exp(\lambda t)$ and so $PA_M(\rho, T) \leq \exp((\lambda - q)t)$, which converges to zero, exponentially fast, if $\lambda< q$.

We now consider what happens if we allow $n$ to grow, in which case the time $\tau_n$ until the Yule tree has $n$ leaves is a random variable.  
Firstly, observe that, for such a tree, $t(v)$ takes the same value for all leaves $v$ (since Yule trees are ultrametric).  Conditional  on $\tau_n = t$ we have, from 
(Eqn. (\ref{limiting2}):
$$\sum_{v \in \L} \exp(-qt(v)) = n\exp(-qt). $$

Thus,
\begin{equation}
\label{limiting}
PA_{M;Y}(\rho)  -\pi \leq  n \EE[\exp(-q\tau_n)].
\end{equation}

Now it is a classic result \citep{ken} that the number $N_t$ of leaves in a Yule tree generated for time $t$ and speciation rate $\lambda$ (starting with a single lineage) has a geometric distribution with parameter $1-\exp(-\lambda t)$. In other words, we have: $$\PP(N_t \geq n) = (1-\exp(-\lambda t))^{n-1}.$$
Moreover, by definition we have:  $$\PP(\tau_n \leq  t) = \PP(N_t \geq n), \mbox{ } n \geq 1$$
and so, taking $t= t_n=(1-\delta) \frac{\ln(n)}{\lambda}$  where $\delta > 0$ is chosen so that  $q > \lambda/(1-\delta)$ we have:

\begin{equation}
\label{pptau}
\PP(\tau_n \leq  t_n) \sim \exp(-n^{\delta})
\end{equation}
and so the right-hand side of Eqn. (\ref{limiting}) (namely $n \EE[\exp(-q\tau_n)] $) is equal to:
$$\EE[ne^{-q\tau_n}|\tau_n\leq t_n]\cdot \PP(\tau_n \leq t_n) + \EE[ne^{-q\tau_n}|\tau_n> t_n]\cdot \PP(\tau_n>t_n),$$
which, using Eqn. (\ref{pptau}) we can further bound as follows:
$$ n\cdot \exp(-n^{\delta}) + n\exp(-qt_n)\cdot (1-\exp(-n^{\delta})).$$
The first term in this last expression converges to zero as $n$ tends to infinity, while the second term is bounded above by 
$$n\exp(-qt_n) = n\exp(-q(1-\delta)\ln(n)/\lambda) = n^{1-q(1-\delta)/\lambda},$$ which also converges to zero as $n \rightarrow \infty$  since $q>\lambda$ and $\delta>0$ has been chosen so that  $q > \lambda/(1-\delta)$.

Thus provided that $q> \lambda$ we may select $\delta$ sufficiently small (but positive) to ensure that the predictive accuracy 
of any method $M$ on a Yule tree converges to the trivial bound as $n$ tends to infinity.

{\em Proof of (2):}  For any time-reversible continuous-time Markov process on $r$ discrete states, the transition probability of being in the starting state after any given time $t$ can be written as:
\begin{equation}
\label{aldy}
p_{ii}(t) = \pi_i + \sum_{j=2}^r a_j \exp(-b_i t),
\end{equation}
where  $-b_2, \ldots, -b_r $ are all strictly negative, and comprise the non-zero eigenvectors of the rate matrix, and where $a_j>0$ for all $j$ \citep{ald}.
 We will assume the $b_i$ are ordered in increasing order of absolute value. 
Note that the constants $b_i$ are proportional to the rate of the process (in our setting the substitution rate), but the $a_j$ are independent of this rate (and also of $t$). 
 Eqn. (\ref{aldy}), and the condition $p_{ii}(0) = 1$ implies that:
 $$p_{ii}(t) \geq  \pi_i + (1-\pi_i)  \exp(-b_r t).$$

For a conservative process we have $\pi_i = \pi (= \frac{1}{r})$ and so 
 \begin{equation}
\label{aldy2}
 p_{ii}(t) \geq  \pi + (1-\pi) \exp(-b_r t)
\end{equation}

Now let $T$ be a  binary tree, rooted on an ancestral node of degree 1 (we assume this rather than a root of degree 2 since we wish to model a Yule process grown from a single initial lineage).         
Thus if $T$ has $n$ extant leaves, $T$ has $n$ pendant edges and $n$ ancestral nodes (including the root node). 

Consider the following method $M$ for predicting the state at an ancestral node $v$ of $T$.  Given $v$ select  any  leaf that is in the clade below $v$, and estimate the state at $v$ by the state at this leaf. 
Then, conditional on the evolutionary time from $v$ to the present being $t$, inequality (\ref{aldy2}) gives:
$$PA_M(v, T)   \geq  \pi + (1-\pi)  \exp(-b_r t).$$

The following randomized scheme selects uniformly at random one of the ancestral nodes of the tree $T$.    First select uniformly at random one of the pendant edges of this tree -- call this randomly sampled edge $e$.   If the ancestral node $v$ of $T$ that is incident with $e$ is not adjacent with another extant leaf then select $v$.
Otherwise (i.e. if $v$ is adjacent to two extant leaves) toss a fair coin, and if the outcome is `heads' select $v$, while if the outcome is `tails'  select uniformly at random an ancestral node of the tree that is not adjacent to any extant leaf. 
It can be checked this randomization scheme selects each of the $n$ ancestral nodes of $T$ with the same probability (this is because the number of ancestral nodes adjacent to two extant leaves is precisely the same as the number of ancestral nodes adjacent to no extant leaves, in any such $T$).   In particular, the probability that this process selects an ancestral  node that is
adjacent to a leaf is equal to the proportion of ancestral nodes that are incident with at least  one extant leaf, and this proportion (for any binary tree) is at least  $\frac{1}{2}$.  
Thus, with probability at least $\frac{1}{2}$
the node $v$ selected by this process will have:
\begin{equation}
\label{PAinM}
PA_M(v, T)   \geq  \pi + (1-\pi)  \exp(-b_r \theta),
\end{equation}
where $\theta$ is the length of the pendant edge $e$ that was selected uniformly at random.
On the other hand, if the node selected is not incident with an extant leaf then 
$PA_M(v, T) \geq \pi$, by the assumption that the model is conservative.  So combining this with (\ref{PAinM}) we have, 
conditional on the length  $\theta$ of the randomly selected pendant edge $e$,
$$\overline{PA}_M(T)  \geq \pi + \frac{1}{2} (1-\pi)  \exp(-b_r \theta).$$

Let us now sample a Yule tree with $n$ leaves  that has grown from a single lineage  (this is equivalent, under a uniform improper prior,  to growing the tree until just before it has $n+1$ leaves).  
Then the length of a randomly selected pendant edge in a Yule tree (call it $L$)  has expected value $\frac{1}{2\lambda}$ \citep{moo}.  
Thus, by Jensen's inequality (applied to the convex function $y=\exp(-x)$) the expected value of
$\exp(-b_r L)$ is at least $\exp(-\frac{b_r}{2\lambda})$ and so 
$$\overline{PA}_{M;Y}=\EE[\overline{PA}_M(\T)]   \geq  \pi + \frac{1}{2}(1-\pi)  \exp(-\frac{b_r}{2\lambda}) \geq  \pi + \frac{1}{2}(1-\pi)(1-\frac{b_r}{2\lambda}).$$
Alternatively, we can use the fact that $L$ has an exponential distribution \citep{sta} to obtain a similar bound, 
noting that the expected value of $\exp(-b_rL)$ is exactly $1/(1+ \frac{b_r}{2\lambda}).$
This shows that $\overline{PA}_{M;Y}$ is bounded away from the trivial bound, as claimed.

\newpage

\setcounter{page}{1}
\section{Online Appendix}

\subsection{1.  Proof of Theorem~\ref{firstthm}}

{\em Proof of (1)}  Consider the tree $T_{2n}$ with leaf set $\L_{2n} = \{1,2,...,2n\}$ shown in Fig.~\ref{tricky_tree1}(a).  We can estimate the root state as follows.
Record the frequency of the states that occurs at the odd-labelled leaves ($1,3, \ldots, 2n-1$), and estimate the root state by the most frequently occurring state (in the case of a tie, break it uniformly). 
The reason for ignoring half the leaves is purely for mathematical convenience, since  if we do so, and if the root is in state $i$, then for each $j$  the number $N_j$ of odd-labelled leaves that are in state $j$
has a binomial distribution with $n$ trials and probability of success of $\pi_{ij}(t)$ where $\pi_{ij}(t)$ is the transition probability from state $i$ to state $j$ in time $t$.

It follows from the Central Limit Theorem (or, indeed, just the weak law of large numbers) that the values $N_j/n$ converge in probability to the corresponding $\pi_{ij}(t)$ values, and so the probability that the majority state agrees with the root state converges to 1 as $n$ grows (provided $t$ is held fixed) since for a conservative model $\pi_{ii}(t) > \pi_{ij}(t)$ for all $j \neq i$.

Let us now consider an internal non-root node $v_i$.  In this case,
for any  subset $U=\{u_1,\ldots, u_k\}$ of nodes of $T_{2n}$, let $X(u_1, \ldots, u_k)$ denotes the random variable that describes the assignment of states to the nodes in $U$.  Then we have the following Markov chain:
$$X(v_i) \longrightarrow X(2i-1, 2i, \rho) \longrightarrow X(\L_{2n}),$$
since $X(v_i)$ and $X(\L_{2n})$ are conditionally independent once we specify the states at the root node and the leaves $2i-1$ and $2i$.  
We now invoke the Data-Processing Inequality \citep{cov} to  obtain:
\begin{equation}
\label{loss}
I(X(v_i); X(\L_{2n})) \leq I(X(v_i); X(2i-1, 2i, \rho)).
\end{equation}
Now, $I(X(v_i); X(2i-1, 2i, \rho))$ is independent of $n$, and decays to zero exponentially fast with increasing $t$ by Proposition 6 of \citet{sob}.   Thus,  from (\ref{loss})  we have:
$$I(X(v_i); X(\L_{2n})) \leq c_1\exp(-c_2t),$$ for constants $c_1, c_2>0$.

It follows by Lemma~\ref{rab} that, for any $\delta>0$, we can select $t$ and $n$ appropriately so that  the predictive accuracy for any method for inferring the non-root internal nodes is less than $\delta$, and simultaneously (by the first part of the proof) the predictive accuracy for the root is at least $1-\delta$.  This completes the proof of part  (1).

{\em Proof of (2)}  Consider the tree in shown in Fig.~\ref{tricky_tree1}(b).  Suppose the height of the two pendant trees is at most $\epsilon$ and the length of the two branches incident with the root is at least $L$.  Then  the states at the nodes in the two subtrees
can be predicted with an accuracy that converges to 1 as $\epsilon$ converges towards 0 (by e.g. majority rule applied to the leaves below $v$ since as $\epsilon$ converges to zero the probability that all the leaves are in the same state as $v$ converges to 1).   On the other hand,  by Lemma~\ref{rab}, and Proposition 6 of \cite{sob} the accuracy of  inferring the root state
converges to the trivial bound as $L \rightarrow \infty$.

\subsection{2.  Proof of Theorem~\ref{nextthm}}  

{\em Part (1)}:  Consider the tree in  Fig.~\ref{tricky_tree2}(a).  We will suppose that the left-hand tree $T_0$ is (close to) a star tree, and the right-hand tree consists of two fully balanced binary trees, each having edges of equal length.   We first consider the predictive accuracy of MR.

For any $\delta>0$ we can select the length $L$ of the branch leading to $T_0$  sufficiently long so that the probability that the root $v$ of $T_0$  has the same state as the root node $\rho$ of $T$ is no more than $\frac{1}{2} + \delta$.  Moreover, for $n$ sufficiently large, and $\epsilon$ (the height of the left-hand tree) sufficiently short, the number of leaves in $T_0$ in the same state as node $v$  is at least $\frac{3}{4}n^2$ with probability as close to 1 as we wish. 

Thus, since $T_1$ and $T_2$ have between them just $2n$ leaves, 
when $v$ has the same state as $\rho$ (and $n>8$) it follows that  MR will correctly predict the root state of $T$, but when $v$ has a different state to the root state it will not.  The probability of the former event is $\frac{1}{2} + \delta$, and so $PA_{MR}(\rho, T) \leq \frac{1}{2} + \delta.$

For the claim concerning MP, we first establish a general lemma concerning the predictive accuracy of this method when applied to a chain of subtrees.

Consider the tree shown in Fig.~\ref{tricky_tree2}(a) obtained by making rooted binary subtrees $T_0, T_1, T_2$ adjacent to two adjacent nodes
$v_1, v_2$ by attaching $T_0$ and $T_1$ by new edges to $v_1$, and attaching $T_2$ by a new edge to $v_2$. This produces
a rooted binary tree $T(T_0, T_1, T_2)$ with root node $\rho = v_2$. 

\begin{lemma}
\label{lemsim}
In a two-state model, suppose the predictive accuracy of MP for estimating the root of $T_i$ from the leaves of $T_i$ (for $i=1$ and $i=2$) is at least $1-\nu$.
 Then, for any $x>0$, if the  three branches connecting $T_1$ and $T_2$ to the root are sufficiently short,
  then the predictive accuracy of MP for estimating the root state of  $T(T_0, T_1, T_2)$ is at least  $1-4 \nu - x$.
\end{lemma}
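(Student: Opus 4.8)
The plan is to reduce everything to the bottom-up (first) pass of the Fitch--Hartigan algorithm on the three subtrees, exploiting the fact that maximum parsimony ignores branch lengths. Write $r_i$ for the root of $T_i$ and let $F_i \in \{\{0\},\{1\},\{0,1\}\}$ denote the Fitch set computed at $r_i$ by the bottom-up pass. Since this pass at $r_i$ depends only on the leaves lying below $r_i$, the set $F_i$ computed inside $T(T_0,T_1,T_2)$ coincides with the one computed from $T_i$ alone; moreover, because the process is stationary and Markovian, the joint law of $(X(r_i),\text{leaves of }T_i)$ is the same in both settings. The MP root estimate for $T_i$ equals $X(r_i)$ exactly when $F_i=\{X(r_i)\}$, and is a fair coin toss when $F_i=\{0,1\}$. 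Writing $\beta_i=\PP(F_i=\{1-X(r_i)\})$ and $\gamma_i=\PP(F_i=\{0,1\})$, the hypothesis that the standalone MP accuracy is at least $1-\nu$ reads $\beta_i+\tfrac12\gamma_i\le\nu$, from which I extract the bound I will actually use, namely $\PP(F_i\ne\{X(r_i)\})=\beta_i+\gamma_i\le 2\nu$.

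The combinatorial heart of the argument is the claim that if $F_1=F_2=\{X(\rho)\}$ then the Fitch set at the root $\rho=v_2$ equals $\{X(\rho)\}$, \emph{whatever} $F_0$ is. I would verify this by running the two remaining Fitch combinations explicitly, using the rule ``intersect when the intersection is nonempty, otherwise take the union''. At $v_1$ one combines $F_0$ with $F_1=\{X(\rho)\}$: the three possibilities $F_0=\{X(\rho)\},\{1-X(\rho)\},\{0,1\}$ give the set at $v_1$ equal to $\{X(\rho)\},\{0,1\},\{X(\rho)\}$ respectively, so in every case the set at $v_1$ is either $\{X(\rho)\}$ or $\{0,1\}$ and never the wrong singleton. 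Combining this with $F_2=\{X(\rho)\}$ at $\rho$, the intersection is $\{X(\rho)\}$ in both remaining cases, so the root Fitch set is the correct singleton and MP predicts $X(\rho)$ deterministically. Thus the event $\{F_1=\{X(\rho)\}\}\cap\{F_2=\{X(\rho)\}\}$ forces a correct MP root prediction, and it suffices to lower-bound the probability of that event. Notice that the leaf-rich, distant subtree $T_0$ plays no role here precisely because parsimony discards branch lengths.

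It then remains to bound $\PP(F_1=\{X(\rho)\},\,F_2=\{X(\rho)\})$ from below, and here I use the short connecting branches. Let $G$ be the event $\{X(r_1)=X(\rho)\}\cap\{X(r_2)=X(\rho)\}$; since $r_1$ and $r_2$ are joined to $\rho$ by the three short edges, $\PP(G^c)$ can be made smaller than any prescribed $x$ by taking these branch lengths sufficiently small. On the event $\{F_1=\{X(r_1)\}\}\cap\{F_2=\{X(r_2)\}\}\cap G$ we have $F_1=F_2=\{X(\rho)\}$, hence a correct MP root prediction. A union bound over the three complementary events then yields
$$PA_{MP}(\rho,T(T_0,T_1,T_2)) \ge 1-\PP(F_1\ne\{X(r_1)\})-\PP(F_2\ne\{X(r_2)\})-\PP(G^c)\ge 1-4\nu-x,$$
using $\PP(F_i\ne\{X(r_i)\})\le 2\nu$ and $\PP(G^c)\le x$; note that this union bound requires no independence among the events.

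The main difficulty I anticipate is bookkeeping rather than conceptual. The factor $4$ (two subtrees, each contributing $2\nu$) is forced on us because an ambiguous Fitch set $F_i=\{0,1\}$ must be counted as a failure here, even though in the standalone accuracy it costs only half as much through the coin toss; this is where the doubling $\nu\mapsto 2\nu$ enters. The one step needing genuine care is the claim that the law of $F_i$ is unchanged when $T_i$ is embedded in the larger tree: this follows from stationarity together with the Markov property at $r_i$, which makes the leaves of $T_i$ conditionally independent of everything above $r_i$ given $X(r_i)$, so that $\beta_i$ and $\gamma_i$ are intrinsic to $T_i$ and the hypothesis transfers verbatim.
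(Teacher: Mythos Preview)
Your proof is correct and follows essentially the same approach as the paper: both arguments hinge on the observation that if the Fitch first-pass sets at the roots of $T_1$ and $T_2$ are the correct singleton $\{X(\rho)\}$, then the root Fitch set is forced to be $\{X(\rho)\}$ regardless of $F_0$, and both convert the accuracy hypothesis into the bound $\PP(F_i\ne\{X(r_i)\})\le 2\nu$ via the identity $PA = 1-\beta_i-\tfrac12\gamma_i$. The only cosmetic difference is that the paper multiplies $(1-2\nu)^2(1-x)$ using conditional independence of the subtrees given no change on the connecting edges, whereas you use a union bound over the three failure events; both routes land on $1-4\nu-x$.
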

\begin{proof}
Without loss of generality suppose that the root state is 0.  A sufficient condition for MP to correctly infer the root state, is that the following three conditions hold: 
\begin{itemize}
\item[(i)] there is no change of state on the three edges of $T(T_0, T_1, T_2)$ that connect $T_1$ and $T_2$ to the root of $T$; and
\item[(ii)]  the first pass of the Fitch algorithm  returns the singleton state $0$ for the root of $T_1$ when applied to $T_1$; and
\item[(iii)] the first pass of the Fitch algorithm  returns the singleton state $0$ for the root of $T_2$ when applied to $T_2$. 
\end{itemize}

When  conditions (i)--(iii) hold it follows that MP will infer the root state as the singleton state $0$ (which matches the true root state) regardless of what the Fitch algorithm returns for $T_0$
and of whether or not  there has been any state change on its incident edge.

 Now, event (i) occurs with probability  at least $1-x$  by making these edges sufficiently short,
 and events (ii) and (iii) are conditionally independent given event (i) and each of them occurs
 with probability at least $(1-2\nu)$, conditional on event (i).  This is because, when we have two states, simple algebra shows that the predictive accuracy of MP in estimating the root state of a tree (in this case $T_i$)  is at most
 $\frac{1}{2}(1+S)$ where $S$ is the probability that the first pass of the Fitch algorithm returns the singleton root state for that subtree). 

Thus,  the probability of the sufficient condition (i)+(ii) +(iii)  is  at least: 
 $(1-2\nu)^2(1-x)  > 1-4\nu-x$.  Thus $PA_{MP}(T, \rho) \geq 1- 4\nu -x.$
\end{proof}

Now, by  Proposition 2.1 of \citet{gas},  the predictive accuracy of MP for the root state of each of $T_1$ and $T_2$ can be made as close to 1 as we wish by ensuring that the equal branch lengths in the right-hand tree are sufficiently small. This will require selecting the number of leaves $n$ sufficiently large.  Lemma~\ref{lemsim} then ensures that MP will infer the root state of $T$ with an accuracy that can be made as close to 1 as we wish.

{\em Part (2)}: Consider a caterpillar tree with $n$ leaves, in which the interior edges of the tree lie all within height $\epsilon$ from the root as indicated in Fig.~\ref{tricky_tree2}(b).  As $\epsilon$ tends to zero, the 2-state symmetric process converges to a process  on the caterpillar tree with $n$ leaves in which the leaves are assigned the root state ($0$) independently with probability $q= 1-p$, and the state $1$ with probability $p$.  Let $s_n$ (respectively $d_n$ and $e_n$) denote the probability that, for this assignment of states to the leaves, the most parsimonious reconstruction assigns the root state
$0$ (respectively, state $1$ and the indeterminate state $\{0,1\}$).  Then the values $(s_n, d_n, e_n)$ satisfy the linear recursion:
$s_1= q, d_1=p, e_1=0$, and for $n \geq 1$,
$$s_{n+1} = qs_n + qe_n;$$
$$d_{n+1} = pd_n + pe_n;$$
$$e_{n+1} = ps_n + qd_n.$$
Let $(s,d,e) =  \lim_{n \rightarrow \infty} (s_n, d_n, e_n)$
(this limit exists, and is even independent of the initial values for $s(1), d(1), e(1)$ because the linear recursion described corresponds to the  transition matrix of a 3-state Markov chain which is irreducible and aperiodic, and so it converges to a unique equilibrium distribution, regardless of the distribution of the initial state). 
Then $$(s, d, e)  = \frac{1}{1-pq}(q^2, p^2, pq).$$
Consequently the accuracy of parsimony on this tree is: 
$$s+ \frac{1}{2}e = \frac{q^2+pq/2}{1-pq}.$$
Notice that this function is a continuous function of $p$, it takes the value $\frac{1}{2}$ at $q=0.5$ and it increases monotonically to the value $1$ at $q=1$.
Thus, we can select $q>0.5$ so that $s+ \frac{1}{2}e < \frac{1}{2}+ \delta$.  It follows that (for $\epsilon$ sufficiently small, and $n$ sufficient large) the predictive accuracy 
for MP in root state estimation will be less than the trivial bound ($\frac{1}{2}$) plus $\delta$.
On the other hand, for MR, the number of leaves in the same state ($0$) as the root is a sum of $n$ identically-distributed Bernoulli random variables, which become 
independent as $\epsilon \rightarrow 0$. Thus, by the central limit theorem we can select $n$ sufficiently large (and $\epsilon$ sufficiently small) so that the predictive accuracy 
of MR is at least $1-\delta$.

\subsection{3.  Proof of Proposition~\ref{parsimony_yule}}

The proof relies on the following result.

\begin{lemma}
\label{next3thm}
\mbox{ } 

Suppose that a rooted binary tree $\T$ on $n=2^h$ leaves is completely balanced (i.e. each leaf is the same number of edges from the root).  If each leaf is independently assigned the root state $0$ with  probability $1-p> \frac{1}{2}$ then the probability under a two-state symmetric model that the most parsimonious state at the root is $0$ converges to $1$ as $n$ tends to infinity.  
\end{lemma}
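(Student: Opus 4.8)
The plan is to track the distribution of the Fitch--Hartigan first-pass set as one moves up the balanced tree level by level, and then analyse the resulting deterministic recursion. Because the leaves receive their states independently and the tree is completely balanced, all nodes at a fixed distance $k$ from the leaves are identically distributed, and the two subtrees feeding into any node at level $k+1$ involve disjoint leaf sets and are therefore independent. Writing $q = 1-p > \tfrac12$, I would let $s_k, d_k, e_k$ be the probabilities that the Fitch set at a node $k$ edges above the leaves equals $\{0\}$, $\{1\}$ and $\{0,1\}$ respectively, so that $(s_0,d_0,e_0) = (q,p,0)$. Applying the Fitch combination rule (take the intersection of the two children's sets when it is nonempty, and otherwise their union) to two independent children yields
\begin{align*}
s_{k+1} &= s_k(s_k + 2e_k),\\
d_{k+1} &= d_k(d_k + 2e_k),\\
e_{k+1} &= 2 s_k d_k + e_k^2,
\end{align*}
and one checks directly that $s_k + d_k + e_k = 1$ is preserved. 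The root of the tree sits at level $h$, where $n = 2^h$, so $h \to \infty$ as $n \to \infty$; since the probability that the Fitch set at the root equals $\{0\}$ is $s_h$, and this lower-bounds the probability that MP assigns root state $0$, it suffices to prove $s_k \to 1$.

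The crucial step I would exploit is that the difference $\Delta_k := s_k - d_k$ satisfies an exact multiplicative recursion. Subtracting the first two lines and factoring gives $\Delta_{k+1} = \Delta_k (s_k + d_k + 2e_k) = \Delta_k(1 + e_k)$, using $s_k + d_k + e_k = 1$. Hence $\Delta_k = (q-p)\prod_{j=0}^{k-1}(1 + e_j)$. Since $q > p$ this quantity is strictly positive and nondecreasing in $k$, while it is bounded above by $s_k \le 1$. Therefore the infinite product $\prod_{j \ge 0}(1 + e_j)$ converges to a finite limit, which (as the factors are $\ge 1$) forces $\sum_j e_j < \infty$ and in particular $e_k \to 0$.

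It then remains to upgrade $e_k \to 0$ to $s_k \to 1$. Being monotone and bounded, $\Delta_k$ converges; combined with $e_k \to 0$ (so $s_k + d_k \to 1$) this makes $s_k$ and $d_k$ converge individually. Passing to the limit in $d_{k+1} = d_k(d_k + 2e_k)$ gives $\beta = \beta^2$, where $\beta := \lim_k d_k$. Because $\Delta_k \ge q - p > 0$ keeps $s_k > d_k$ and hence $d_k < \tfrac12$, the only admissible root is $\beta = 0$; thus $d_k \to 0$ and $s_k \to 1$, as required.

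I expect the decisive idea to be spotting the telescoping identity $\Delta_{k+1} = \Delta_k(1 + e_k)$: it simultaneously delivers $e_k \to 0$ and, via the floor $\Delta_k \ge q - p$, rules out the dynamics stalling at the unstable symmetric fixed point $(\tfrac13,\tfrac13,\tfrac13)$ where $\Delta = 0$. The remaining work — verifying the three Fitch combination probabilities and the limiting argument for $d_k$ — is routine.
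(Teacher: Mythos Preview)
Your proof is correct and is essentially the same as the paper's: both set up the same Fitch recursion for $(s_k,d_k,e_k)$, derive the key identity $\Delta_{k+1}=\Delta_k(1+e_k)$ for $\Delta_k=s_k-d_k$, use its monotone-bounded convergence to force $e_k\to 0$, and then pass to the limit in the quadratic recursion to conclude $d_k\to 0$ and $s_k\to 1$. The only differences are cosmetic (your indexing starts at level $0$ rather than $1$, and you phrase the $e_k\to 0$ step via the convergent infinite product rather than directly from the convergence of $\Delta_k$).
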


Before presenting this proof of this lemma, we show how it implies Proposition~\ref{parsimony_yule}.  
 For the tree shown in Fig.~\ref{tricky_tree2}(b) if the top portion of the tree is completely balanced  then, as $\epsilon$ converges to zero, and $n$ grows (the height of the tree is fixed), the predictive accuracy of MP in estimating the root state in a symmetric two-state model converges to 1 (note  that as $\epsilon$ converges to zero, all states at the nodes of the top portion of the tree agree with the root state, which allows us to apply Lemma~\ref{next3thm}).

\subsection{Proof of Lemma~\ref{next3thm}}

Consider a completely balanced tree of depth $h$ (and so with $n=2^h$ leaves), in which each leaf is assigned state $0$ (the root state) independently with probability $1-p$ and 
state 1 with probability $p$.  Let $s_h$ (respectively $d_h, e_h$) be the probability that for the resulting assignment of states to the leaves of the tree
the most parsimonious reconstruction of the root state is $0$ (respectively,  $1, \{0,1\})$.
We have $s_1= (1-p)^2, d_1 = p^2, e_1 = 2p(1-p)$, and for $h \geq 1$:
$$s_{h+1} = s_h^2 + 2s_he_h;$$
$$d_{h+1} = d_h^2 + 2d_he_h;$$
$$e_{h+1} = e_h^2 + 2s_hd_h.$$

Subtracting the second equation above from the first  (and using the identity $s_h+d_h = 1-e_h$) we obtain the following equation for $G_h = s_h -d_h$:
$$ G_{h+1} = G_h(1-e_h) + 2G_h e_h,$$ which we can rewrite as:
\begin{equation}
\label{Geq}
G_{h+1} = G_h(1+e_h).
\end{equation}
Notice that if $p<1/2$ then $G_1 >0$ in which case Eqn. (\ref{Geq}) implies that  $G_h$ is monotone increasing. Thus, since $G_h$ is also bounded above by $1$,  $G_h$ has a limit. Moreover, the convergence of $G_h$ (together with  Eqn.~(\ref{Geq})) now implies  that $e_h \rightarrow 0$ as $h \rightarrow \infty$.
Since $e_h = 1-(s_h+d_h)$ this implies that $s_h+d_h$ converges to $1$, and that $s_h-d_h (=G_h)$ converges to some constant, and hence $(s_h, d_h, e_h)$ has a limit $(s, d, 0)$.
From the two recursion equations for $s_h, d_h$ above we see that $s$ and $d$ must
satisfy the system $s=s^2; d=d^2,$ and since  $s-d \geq G_1>0$ the only solution possible is 
$s=1, d=0$.  Thus,  the predictive accuracy of MP converges to $1$ as $h \rightarrow \infty$, which completes the proof of Lemma~\ref{next3thm}.

We finish by stating the analogous recursion when the trees on $n$ leaves are generated according to the Yule-Harding distribution. Again assume each leaf is assigned state $0$ (the root state) independently with probability $1-p$ and 
state 1 with probability $p$, and let $S(n)$ (respectively, $D(n)$, $E(n)$)  be the probability that for the resulting assignment of states to the leaves of the tree the most parsimonious reconstruction of the root
state is $0$ (respectively,  $1, \{0,1\})$.   Since the Yule-Harding distribution provides a uniform distribution on the size distribution of the two subtrees incident with the root (and each of these two
subtrees follows the Yule-Harding distribution) the recursion for 
$(S(n), D(n), E(n))$ is as follows:
$S(1)=1-p, D(1)= p, E(1) = 0,$ and for $n >1$:
$$S(n) = \frac{1}{n-1}\sum_{i=1}^{n-1}[ S(i)S(n-i) + 2S(i)E(n-i)],$$
$$D(n) =  \frac{1}{n-1}\sum_{i=1}^{n-1} [D(i)D(n-i) + 2D(i)E(n-i)],$$
$$E(n) =  \frac{1}{n-1}\sum_{i=1}^{n-1} [E(i)E(n-i) + 2S(i)D(n-i)].$$

\hfill$\Box$

\newpage

\subsection{4. Additional simulation results with Yule trees and the two-state symmetric model}

\subsection{4.1 Method accuracy}

\noindent The accuracy of all tested methods is displayed in the following tables, for 10, 100 and 1,000 tips, respectively. We measure the accuracy in reconstructing (1) the root state, (2) the state of any randomly selected ancestral node (including the tree root), and (3) the changes along the interior branches. See text for details on the simulation procedure, the tested methods and the accuracy measures. 

\vspace{2cm}
\newpage
\noindent \textbf{10 TIPS}
\vspace{1cm}

\begin{tabular}{|l|c|c|c|c|c|c|c|c|c|} \hline 
\textbf{Rate ratio} & \textbf{1} & \textbf{2} & \textbf{3} & \textbf{4} & \textbf{5} & \textbf{6} & \textbf{8} & \textbf{12} & \textbf{20} \\ \hline 
Root-Majority & 0.542 & 0.655 & 0.728 & 0.794 & 0.829 & 0.852 & 0.889 & 0.925 & 0.956 \\ \hline 
Root-Parsimony & 0.535 & 0.641 & 0.710 & 0.781 & 0.811 & 0.840 & 0.878 & 0.917 & 0.952 \\ \hline 
Root-Likelihood & 0.545 & 0.665 & 0.745 & 0.815 & 0.849 & 0.873 & 0.908 & 0.941 & 0.968 \\ \specialrule{0.15em}{0em}{0em}
Node-Majority & 0.700 & 0.807 & 0.860 & 0.891 & 0.911 & 0.926 & 0.944 & 0.962 & 0.978 \\ \hline 
Node-Parsdown & 0.692 & 0.796 & 0.849 & 0.883 & 0.902 & 0.918 & 0.938 & 0.957 & 0.975 \\ \hline 
Node-Parsacct & 0.706 & 0.822 & 0.880 & 0.915 & 0.933 & 0.947 & 0.965 & 0.980 & 0.991 \\ \hline 
Node-Parsdelt & 0.699 & 0.815 & 0.873 & 0.909 & 0.930 & 0.945 & 0.963 & 0.979 & 0.990 \\ \hline 
Node-Parsinde & 0.699 & 0.816 & 0.873 & 0.910 & 0.930 & 0.944 & 0.963 & 0.979 & 0.990 \\ \hline 
Node-Likedown & 0.703 & 0.813 & 0.866 & 0.898 & 0.918 & 0.932 & 0.950 & 0.967 & 0.981 \\ \hline 
Node-Likmargi & 0.714 & 0.832 & 0.888 & 0.920 & 0.939 & 0.952 & 0.967 & 0.980 & 0.990 \\ \hline 
Node-Likebest & 0.710 & 0.831 & 0.888 & 0.922 & 0.942 & 0.955 & 0.971 & 0.984 & 0.993 \\ \specialrule{0.15em}{0em}{0em}
Branch-Majority & 0.515 & 0.665 &  0.753 &  0.804 & 0.841 &  0.861 & 0.897 & 0.932 & 0.957 \\ \hline 
Branch-Parsdown & 0.490 & 0.633 &  0.722 &  0.775 & 0.817 &  0.837 &  0.879 & 0.918 & 0.948 \\ \hline 
Branc -Parsacct & 0.544 & 0.703 &  0.799 &  0.852 & 0.890 &  0.906 &  0.941  & 0.966 & 0.982 \\ \hline 
Branch-Parsdelt & 0.539 & 0.695 &  0.793 &  0.848 & 0.888 &  0.906 &  0.940 & 0.965 & 0.983 \\ \hline 
Branch-Parsinde & 0.534 & 0.696 &  0.792 &  0.848 & 0.887 &  0.905 &  0.940 & 0.966 & 0.982 \\ \hline 
Branch-Likedown & 0.521 & 0.673 &  0.762 &  0.815 & 0.851 &  0.869 &  0.907 & 0.940 & 0.962 \\ \hline 
Branch-Likmargi & 0.552 & 0.713 &  0.805 &  0.857 & 0.890 &  0.905 &  0.939 & 0.963 & 0.978 \\ \hline 
Branch-Likebest & 0.563 & 0.726 &  0.820 &  0.872 & 0.907 &  0.922 &  0.953 & 0.974 & 0.987 \\ \hline 
\end{tabular}

\vspace{2cm}
\newpage
\noindent \textbf{100 TIPS}
\vspace{1cm}

\begin{tabular}{|l|c|c|c|c|c|c|c|c|c|} \hline 
\textbf{Rate ratio} & \textbf{1} & \textbf{2} & \textbf{3} & \textbf{4} & \textbf{5} & \textbf{6} & \textbf{8} & \textbf{12} & \textbf{20} \\ \hline 
Root-Majority & 0.503 & 0.541 & 0.625 & 0.702 & 0.748 & 0.802 & 0.851 & 0.905 & 0.944 \\ \hline 
Root-Parsimony & 0.503 & 0.533 & 0.601 & 0.667 & 0.722 & 0.776 & 0.831 & 0.894 & 0.943 \\ \hline 
Root-Likelihood & 0.504 & 0.547 & 0.642 & 0.725 & 0.781 & 0.832 & 0.880 & 0.933 & 0.966 \\ \specialrule{0.15em}{0em}{0em}
Node-Majority & 0.690  & 0.793  & 0.848  & 0.883  & 0.906  & 0.921  & 0.942  & 0.962  & 0.977  \\ \hline 
Node-Parsdown & 0.681  & 0.780  & 0.836  & 0.871  & 0.896  & 0.912  & 0.935  & 0.957  & 0.974  \\ \hline 
Node-Parsacct & 0.696  & 0.809  & 0.872  & 0.911  & 0.936  & 0.951  & 0.970  & 0.985  & 0.994  \\ \hline 
Node-Parsdelt & 0.685  & 0.797  & 0.865  & 0.907  & 0.934  & 0.952  & 0.971  & 0.986  & 0.995  \\ \hline 
Node-Parsinde & 0.688  & 0.800  & 0.866  & 0.907  & 0.933  & 0.950  & 0.970  & 0.985  & 0.994  \\ \hline 
Node-Likedown & 0.692  & 0.798  & 0.857  & 0.892  & 0.915  & 0.930  & 0.949  & 0.968  & 0.981  \\ \hline 
Node-Likmargi & 0.704  & 0.824  & 0.889  & 0.926  & 0.949  & 0.963  & 0.978  & 0.989  & 0.996  \\ \hline 
Node-Likebest & 0.697 & 0.816 & 0.884 & 0.923 & 0.947 & 0.962  & 0.978 & 0.990 & 0.996 \\ \specialrule{0.15em}{0em}{0em}
Branch-Majority &  0.488 & 0.639 & 0.730 & 0.786 & 0.826 & 0.854 & 0.890 & 0.924 & 0.956 \\ \hline 
Branch-Parsdown &  0.460 & 0.603 & 0.695 & 0.753 & 0.796 & 0.828 & 0.868 & 0.909 & 0.947 \\ \hline 
Branc -Parsacct &  0.511 & 0.678 & 0.781 & 0.842 & 0.883 & 0.912 & 0.944 & 0.970 & 0.989 \\ \hline 
Branch-Parsdelt &  0.504 & 0.667 & 0.775 & 0.840 & 0.884 & 0.914 & 0.947 & 0.973 & 0.990 \\ \hline 
Branch-Parsinde &  0.492 & 0.658 & 0.766 & 0.831 & 0.877 & 0.908 & 0.942 & 0.970 & 0.989 \\ \hline 
Branch-Likedown &  0.491 & 0.647 & 0.742 & 0.799 & 0.840 & 0.868 & 0.903 & 0.936 & 0.964 \\ \hline 
Branch-Likmargi &  0.527 & 0.703 & 0.809 & 0.868 & 0.906 & 0.931 & 0.958 & 0.978 & 0.992 \\ \hline 
Branch-Likebest &  0.530 & 0.701 & 0.807 & 0.867 & 0.907 & 0.932 & 0.959 & 0.980 & 0.993 \\ \hline 
\end{tabular}

\vspace{2cm}
\newpage
\noindent \textbf{1,000 TIPS}
\vspace{1cm}

\begin{tabular}{|l|c|c|c|c|c|c|c|c|c|} \hline 
\textbf{Rate ratio} & \textbf{1} & \textbf{2} & \textbf{3} & \textbf{4} & \textbf{5} & \textbf{6} & \textbf{8} & \textbf{12} & \textbf{20} \\ \hline 
Root-Majority & 0.500 & 0.511 & 0.580 & 0.665 & 0.732 & 0.772 & 0.836 & 0.898 & 0.940 \\ \hline 
Root-Parsimony & 0.496 & 0.507 & 0.544 & 0.609 & 0.678 & 0.734 & 0.806 & 0.882 & 0.935 \\ \hline 
Root-Likelihood & 0.503 & 0.513 & 0.590 & 0.681 & 0.759 & 0.807 & 0.870 & 0.927 & 0.963 \\ \specialrule{0.15em}{0em}{0em}
Node-Majority & 0.690  & 0.790  & 0.848  & 0.883  & 0.905  & 0.921  & 0.941  & 0.961  & 0.977  \\ \hline 
Node-Parsdown & 0.681  & 0.777  & 0.834  & 0.870  & 0.895  & 0.911  & 0.934  & 0.956  & 0.974  \\ \hline 
Node-Parsacct & 0.696  & 0.806  & 0.871  & 0.911  & 0.935  & 0.951  & 0.970  & 0.985  & 0.994  \\ \hline 
Node-Parsdelt & 0.684  & 0.794  & 0.864  & 0.908  & 0.935  & 0.953  & 0.972  & 0.987  & 0.995  \\ \hline 
Node-Parsinde & 0.688  & 0.797  & 0.865  & 0.907  & 0.934  & 0.951  & 0.970  & 0.986  & 0.995  \\ \hline 
Node-Likedown & 0.692  & 0.796  & 0.856  & 0.891  & 0.914  & 0.930  & 0.949  & 0.968  & 0.981  \\ \hline 
Node-Likmargi & 0.704  & 0.821  & 0.889  & 0.928  & 0.951  & 0.965  & 0.979  & 0.991  & 0.997  \\ \hline 
Node-Likebest & 0.697 & 0.812 & 0.883 & 0.925 & 0.949 & 0.963 & 0.979 & 0.990 & 0.997 \\ \specialrule{0.15em}{0em}{0em}
Branch-Majority &  0.485 &  0.634 & 0.726 &  0.784 & 0.823 & 0.851 & 0.888 & 0.925 & 0.955 \\ \hline 
Branch-Parsdown &  0.458 &  0.598 & 0.690 &  0.750 & 0.793 & 0.824 & 0.866 & 0.909 & 0.946 \\ \hline 
Branc -Parsacct &  0.509 &  0.673 & 0.776 &  0.840 & 0.882 & 0.910 & 0.943 & 0.972 & 0.989 \\ \hline 
Branch-Parsdelt &  0.502 &  0.663 & 0.770 &  0.840 & 0.885 & 0.914 & 0.949 & 0.976 & 0.991 \\ \hline 
Branch-Parsinde &  0.488 &  0.649 & 0.757 &  0.827 & 0.874 & 0.905 & 0.941 & 0.972 & 0.989 \\ \hline 
Branch-Likedown &  0.488 &  0.642 & 0.738 &  0.798 & 0.838 & 0.866 & 0.901 & 0.936 & 0.963 \\ \hline 
Branch-Likmargi &  0.525 &  0.699 & 0.808 &  0.871 & 0.910 & 0.934 & 0.961 & 0.982 & 0.993 \\ \hline 
Branch-Likebest &  0.528 &  0.695 & 0.803 &  0.868 & 0.909 & 0.933 & 0.961 & 0.982 & 0.993 \\ \hline 
\end{tabular}

\newpage
\subsection{4.2 Results with biased sampling}

\noindent In these experiments we use the same (Yule) trees and (two-state, symmetric) data, but one of the character state (say 0) has probability 0.5 to be sampled, while the other (i.e. 1) keeps a probability 1.0 to be sampled. All other parameters remain identical. To simulate this condition we replace half of the 0s at the tree tips by an unknown character, and then the tested methods are launched in the standard way. Results with 100 tips are provided in Table below, with a speciation/substitution rate ratio of 6.

\vspace{2cm}

\begin{tabular}{|l|c|c|} \hline 
\textbf{} & \textbf{Root} & \textbf{Node} \\ \hline 
Majority & 0.700 & 0.876 \\ \hline 
Parsdown & 0.737 & 0.874 \\ \hline 
Parsacct & 0.738 & 0.915 \\ \hline 
Parsdelt & 0.736 & 0.899 \\ \hline 
Parsinde & 0.739 & 0.904 \\ \hline 
Likedown & 0.784 & 0.860 \\ \hline 
Likmargi & 0.784 & 0.924 \\ \hline 
Likebest & 0.786 & 0.925 \\ \hline 
\end{tabular}

\newpage
\subsection{5. Additional simulation results with non-molecular clock trees and the HKY+Gamma substitution model}

\noindent The accuracy of all tested methods is displayed in the following table. We measure the accuracy in reconstructing: (1) the root state, (2) the state of any randomly selected ancestral node (including the tree root), and (3) the changes along the interior branches. We also provide (4) the square of node accuracy (in parentheses), for comparison with branch accuracy assuming predictions are independent at both branch extremities. See text for details on the simulation procedure, the tested methods and the accuracy measures.

\resizebox{!}{11cm}{
\begin{tabular}{|l|c|c|c|c|c|} \hline 
\textbf{} & \textbf{Tips} & \textbf{Root} & \textbf{Node} & \textbf{Branch} & \textbf{(square)} \\ \hline 
Majority & {\bf 25}  & 0.833 & 0.919 & 0.858 & (0.845) \\ \hline 
Parsdown & {\bf 25} & 0.822 & 0.911 & 0.836 & (0.831) \\ \hline 
Parsacct &  {\bf 25} & 0.821 & 0.945 & 0.908 & (0.893) \\ \hline 
Parsdelt & {\bf 25} & 0.820 & 0.942 & 0.906 & (0.888) \\ \hline 
Parsinde & {\bf 25} & 0.821 & 0.942 & 0.899 & (0.887) \\ \hline 
Likedown & {\bf 25}& 0.874 & 0.939 & 0.887 & (0.881) \\ \hline  
Likemargi & {\bf 25} & 0.874 & 0.958 & 0.925 & (0.918) \\ \hline  
Likemargi-JC & {\bf 25} & 0.872 & 0.956 & 0.923 & (0.915) \\ \specialrule{0.2em}{0em}{0em}
Majority & {\bf 50} & 0.845 & 0.930 & 0.874 & (0.864) \\ \hline 
Parsdown & {\bf 50} & 0.834 & 0.923 & 0.855 & (0.853) \\ \hline 
Parsacct & {\bf 50} & 0.834 & 0.957 & 0.925 & (0.915) \\ \hline 
Parsdelt & {\bf 50} & 0.834 & 0.954 & 0.923 & (0.910) \\ \hline 
Parsinde & {\bf 50} & 0.835 & 0.954 & 0.919 & (0.910) \\ \hline 
Likedown & {\bf 50} & 0.885 & 0.948 & 0.903 & (0.898) \\ \hline 
Likemargi & {\bf 50} & 0.885 & 0.968 & 0.943 & (0.938) \\ \hline 
Likemargi-JC & {\bf 50}   & 0.881 & 0.966 & 0.940 & (0.934) \\ \specialrule{0.2em}{0em}{0em}
Majority & {\bf 100} & 0.867 & 0.943 & 0.897 & (0.889) \\ \hline 
Parsdown & {\bf 100} & 0.852 & 0.937 & 0.879 & (0.878) \\ \hline 
Parsacct & {\bf 100} & 0.854 & 0.968 & 0.944 & (0.937) \\ \hline 
Parsdelt & {\bf 100} & 0.853 & 0.966 & 0.943 & (0.934) \\ \hline 
Parsinde & {\bf 100} & 0.853 & 0.966 & 0.939 & (0.934) \\ \hline 
Likedown & {\bf 100} & 0.907 & 0.959 & 0.922 & (0.919) \\ \hline 
Likemargi & {\bf 100} & 0.907 & 0.978 & 0.960 & (0.957) \\ \hline 
Likemargi-JC & {\bf 100} & 0.905 & 0.976 & 0.957 & (0.954) \\ \specialrule{0.2em}{0em}{0em}
Majority & {\bf 200} & 0.872 & 0.949 & 0.907 & (0.901) \\ \hline 
Parsdown & {\bf 200} & 0.864 & 0.944 & 0.890 & (0.890) \\ \hline 
Parsacct & {\bf 200} & 0.866 & 0.973 & 0.953 & (0.947) \\ \hline 
Parsdelt & {\bf 200} & 0.865 & 0.972 & 0.952 & (0.945) \\ \hline 
Parsinde & {\bf 200} & 0.866 & 0.972 & 0.949 & (0.945) \\ \hline 
Likedown & {\bf 200} & 0.917 & 0.964 & 0.930 & (0.929) \\ \hline 
Likemargi & {\bf 200} & 0.917 & 0.983 & 0.968 & (0.965) \\ \hline  
Likemargi-JC & {\bf 200} & 0.913 & 0.982 & 0.966 & (0.965) \\ \specialrule{0.2em}{0em}{0em}
Majority & {\bf 400} & 0.886 & 0.955 & 0.918 & (0.913) \\ \hline  
Parsdown & {\bf 400} & 0.883 & 0.951 & 0.903 & (0.904) \\ \hline 
Parsacct & {\bf 400} & 0.883 & 0.978 & 0.961 & (0.957) \\ \hline  
Parsdelt & {\bf 400} & 0.883 & 0.977 & 0.960 & (0.955) \\ \hline 
Parsinde &  {\bf 400} & 0.884 & 0.977 & 0.958 & (0.955) \\ \hline 
Likedown & {\bf 400}  & 0.929 & 0.969 & 0.939 & (0.939) \\ \hline 
Likemargi & {\bf 400} & 0.929 & 0.986 & 0.975 & (0.973) \\ \hline  
Likemargi-JC & {\bf 400} & 0.925 & 0.984 & 0.973 & (0.969) \\ \hline  
\end{tabular}
}

\end{document}